\newcommand{\Rank}{\texttt{rank}}
\definecolor{burgundy}{rgb}{0.5, 0.0, 0.13}
\definecolor{crimson}{rgb}{0.86, 0.08, 0.24}\usepackage{mathtools,extarrows}
\newtheorem{theorem}{Theorem}
\newtheorem{definition}{Definition}
\newtheorem{proposition}{Proposition}
\newtheorem{lemma}{Lemma}
\newtheorem{remark}{Remark}
\definecolor{forestgreen}{rgb}{0.0, 0.27, 0.13}
\renewcommand{\vec}[1]{\mathbf{#1}}
\crefname{figure}{Figure}{Figure}
\definecolor{mygreen}{rgb}{0.0, 0.55, 0.0}
\definecolor{blue-violet}{rgb}{0.54, 0.17, 0.89}
\renewcommand{\E}{\mathbb{E}}
\newcommand{\argmin}{\mathrm{argmin}}
\definecolor{airforceblue}{rgb}{0.36, 0.54, 0.66}
\definecolor{darkblue}{rgb}{0.0, 0.0, 0.55}
\definecolor{bubbleyellow}{RGB}{255, 255, 230} 
\definecolor{bubbleborder}{RGB}{240, 230, 140}
\definecolor{queryred}{RGB}{220, 20, 60}      
\definecolor{graphedge}{RGB}{180, 180, 180}    
\tikzset{
    bubble/.style={
        draw=bubbleborder,
        fill=bubbleyellow,
        line width=0.8pt,
        fill opacity=0.6,
        rounded corners=8pt,
        inner sep=1.5pt 
    },
    querycircle/.style={
        draw=queryred,
        circle,
        very thick,
        inner sep=0.5pt
    }
}
\tikzset{
    querycircle/.style={
        circle,
        draw=red,
        thick,
        inner sep=2pt,
        minimum size=1.5em
    }
}
\begin{document}

\title{The Query Complexity of Local Search in Rounds on General Graphs}

\author{Simina Br\^anzei\footnote{Purdue University. E-mail: \texttt{\url{simina.branzei@gmail.com}}. Supported by US National Science
Foundation grant CCF-2238372.} \and Ioannis Panageas\footnote{University of California, Irvine. E-mail: \texttt{\url{ipanagea@ics.uci.edu}}. Supported by US National Science Foundation grant CCF-2454115.} \and Dimitris Paparas\footnote{Google Research. E-mail: \texttt{\url{dpaparas@google.com}}.}}

\date{\today}

\maketitle

\begin{abstract}
    We analyze the query complexity of finding a local minimum in $t$ rounds on general graphs. More precisely, given a graph $G = (V,E)$ and oracle access to an unknown function $f : V \to \mathbb{R}$, the goal is to find a local minimum---a vertex $v$ such that $f(v) \leq f(u)$ for all  $(u,v) \in E$---using at most $t$ rounds of interaction with the oracle. The query complexity is well understood on grids, but much less is known beyond. This abstract problem captures many optimization tasks, such as finding a local minimum of a loss function during neural network training.
    
    For each graph with $n$ vertices, we prove a deterministic upper bound of $O(t n^{1/t} (s\Delta)^{1-1/t})$, where  $s$ is the separation number and $\Delta$ is the maximum degree of the graph. We complement this result with a randomized lower bound of $\Omega(t n^{1/t}-t)$   that holds for any connected graph. We also find that parallel steepest descent with a warm start provides improved bounds for graphs with high separation number and bounded degree.

    To obtain our results, we utilized an advanced version of Gemini at various stages of our research. We discuss our experience in the Methodology section.
\end{abstract}

\section{Introduction}

Local search is a powerful heuristic for solving hard optimization problems.  Algorithms based on local search include
gradient methods, 
Lloyd’s algorithm for $k$-means clustering, the WalkSAT algorithm for Boolean satisfiability, and the Kernighan-Lin algorithm for graph partitioning. In these settings, the algorithm navigates the landscape by iteratively moving from the current configuration to a neighboring one that improves the objective. The complexity of local search is typically analyzed in two models:  white box  \cite{DBLP:journals/jcss/JohnsonPY88} and  black box \cite{aldous1983minimization}.

In the black box (query) model, there is a graph $G=(V,E)$ and an unknown function $f : V \to \mathbb{R}$ that assigns a value to each vertex.
An algorithm must query a vertex $v$ to learn $f(v)$. The goal is to return a vertex $v$ that is a local minimum: $f(v) \leq f(u)$ for all $(u,v) \in E$.

A general local search algorithm is steepest descent with a warm start~\cite{aldous1983minimization}: query $\ell$ vertices $x_1, \ldots, x_{\ell}$ chosen uniformly at random and find the vertex $x^*$ with minimal function value among these. Then run steepest descent from $x^*$,  returning the final vertex reached by the descent path. When $\ell = \sqrt{n\Delta}$, where  $n$ is the number of vertices and $\Delta$ is the maximum degree of $G$, the algorithm issues $O(\sqrt{n\Delta})$  queries in expectation and has approximately as many rounds of interaction with the oracle.

In settings such as training neural networks, each query is an expensive loss evaluation, making it crucial to parallelize computations. Motivated by these  scenarios, \cite{branzei2022query} analyzed the query complexity of local search when there are at most $t$ rounds of interaction with the oracle. An algorithm running in $t$ rounds  submits a batch of queries in each round $j$,  waits for the answers, and then submit the batch of queries for round $j+1$ \footnote{That is, the choice of queries submitted in round $j$ can only depend on the results of queries from earlier rounds.}.
At the end of the $t$-th round, the algorithm stops and outputs a solution.

The analysis in \cite{branzei2022query} focused on the $d$-dimensional grid,  leaving open the question of understanding complex, non-Euclidean geometries, which are central to many modern optimization tasks.
In this paper, we address the question for general graphs. Next we give several examples (see also \cite{branzei2022query}) of applications captured by this abstract model.

\subsection{Examples}

\paragraph{Linear Regression with Non-Convex Regularization.}
Suppose we are given a dataset of $m$ labeled examples  $\{(\vec{a}_i,b_i)\}_{i=1}^{m} \subseteq \mathbb{R}^{d} \times \mathbb{R}$. The goal is to find a vector of coefficients $\vec{x} \in \mathbb{R}^d$ that minimizes the loss function:
$
L(\vec{x}) = \frac{1}{m}\sum_{i=1}^{m} (\langle\vec{a}_i,\vec{x}\rangle - b_i)^2 + \sum_{j=1}^d P_{\lambda}(x_j).
$
Here, the first term is the mean squared error, and $P_\lambda$ is a non-convex penalty such as the Minimax Concave Penalty~\cite{zhang2010mcp}. 

Although the ideal coefficients exist in the continuous domain $\mathbb{R}^d$, numerical solvers inherently operate via discrete updates within a bounded region, such as $[-B, B]^d$. We formalize this search space as a grid graph $[n]^d$, where each vertex $\vec{x}$ represents a vector of candidate regression coefficients. In this graph, an edge connects two vertices if they differ by a fixed discretization step in exactly one coordinate. A query at vertex $\vec{x}$ reveals the loss $f(\vec{x}) = L(\vec{x})$.


\paragraph{Hyperparameter Optimization.} In hyperparameter optimization for deep neural networks, the domain is the high-dimensional space of hyperparameters and the function $f$ is the validation error of the network. This typically induces a non-convex landscape. 

Unlike simple regression, a single ``query'' corresponds to training a deep neural network to convergence, which is an extremely expensive operation (taking hours or days).
Consequently, sequential search is often infeasible. Instead, practitioners use parallel computing resources to train multiple configurations simultaneously—constituting a single round. Thus minimizing the number of rounds is critical for reducing the total time to solution.

\paragraph{Robust Matrix Estimation (General Graphs).}
In robust matrix estimation,  noisy data is collected into a matrix $M \in \mathbb{R}^{n \times n}$ where we only observe a subset of entries---for example, $M_{ij}$ is the rating user $i$ gives to movie $j$. Let $\Omega \subseteq [n] \times [n]$ denote the set of indices we observed.
Since the observations in $M$ may contain errors, we do not want to match them exactly. Instead, we assume the true underlying preferences form a simple (low-rank) structure. The goal is to find a matrix $X$ that approximates the observations in $M$ while adhering to this structure:
\[
    \min_{X} \| P_\Omega(X - M) \|_1 \quad \text{subject to} \quad \text{rank}(X) \le r,
\]
where $P_\Omega(\cdot)$ is the projection operator that preserves entries in $\Omega$ and zeros out the rest\footnote{That is, $[P_\Omega(A)]_{ij} = A_{ij}$ if $(i,j) \in \Omega$ and $0$ otherwise.}.

Because we cannot search this continuous space with infinite precision, we consider a discrete set of candidate solutions forming a graph $G=(V,E)$:

\begin{description}
\item \emph{The Nodes (Rank-$r$ Matrices):}
    Each node $v \in V$ represents a  candidate matrix $X$. Since the rank is constrained, we parameterize the solution as $X = U W^T$, where $U, W \in \mathbb{R}^{n \times r}$.

    \item \emph{The Edges (Atomic Perturbations):}
    Consider an arbitrary node defined by $X = UW^T \in V$. A neighbor $X'$ is generated by perturbing a single entry of a factor matrix (say $U$) by a scalar step size $\delta$. 
    That is, let $U'$ be a matrix that is identical to $U$ everywhere, except for the entry at index $(i,k)$ defined as:
    $ 
        U'_{ik} = U_{ik} + \delta \,.
    $ 
    The neighbor node is the product $X' = U' W^T$.
    
    Since $U' = U + \delta \mathbf{e}_i \mathbf{h}_k^T$ (where $\mathbf{e}_i \in \mathbb{R}^n$ and $\mathbf{h}_k \in \mathbb{R}^r$ are standard basis vectors), the neighbor relates to $X$ via a rank-1 update:
    $
        X' = (U + \delta \mathbf{e}_i \mathbf{h}_k^T) W^T = X + \delta (\mathbf{e}_i \mathbf{w}_k^T).
    $
    Here $\mathbf{w}_k$ is the $k$-th column of $W$ and the term $\delta (\mathbf{e}_i \mathbf{w}_k^T)$ is a matrix where the $i$-th row is non-zero and all other rows are zero. Consequently, changing a single entry $U_{ik}$ modifies the {entire} $i$-th row of $X$.
\end{description}

This graph is topologically distinct from a grid since it contains triangles\footnote{To see that the matrix graph contains triangles, let $X = UW^T$ be a candidate solution. Let $Y = U'W^T$ be the neighbor where $U'$ is identical to $U$ except for the entry $U'_{1,1} = U_{1,1} + \delta$. Similarly, let $Z = U''W^T$ be the neighbor where $U''$ is identical to $U$ except for the entry $U''_{1,1} = U_{1,1} + 2\delta$. All three solutions share the same factor matrix $W$. Since the factor matrices $U'$ and $U''$ differ by exactly $\delta$ in entry $(1,1)$ (and are identical elsewhere), the nodes $Y$ and $Z$ are also connected by an edge. Thus $X, Y,$ and $Z$ form a triangle.}.

\medskip 

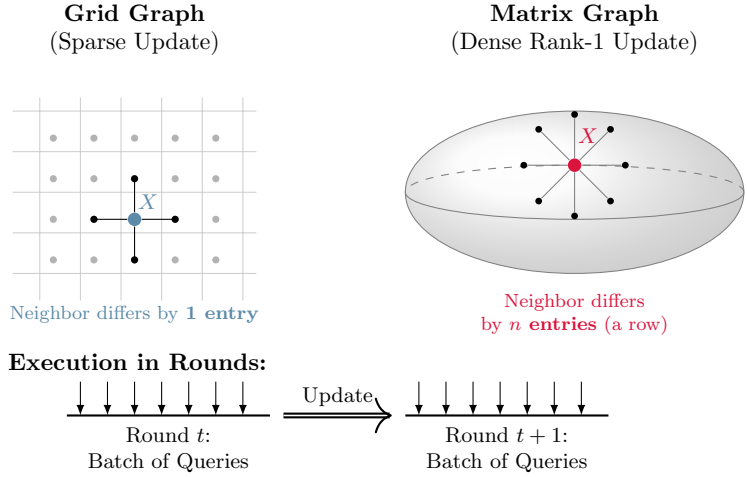
\begin{figure}[H]
    \centering
    \begin{tikzpicture}[scale=0.9, every node/.style={transform shape}]
       
        \node[align=center, font=\small] at (2, 4.2) {\textbf{Grid Graph}\\(Sparse Update)};
        \draw[step=0.6cm, gray!40, very thin] (0.2,0.2) grid (3.8,3.2);
       
        \foreach \x in {0.8, 1.4, ..., 3.2} \foreach \y in {0.8, 1.4, ..., 2.6}
            \fill[gray!60] (\x,\y) circle (1.5pt);
            
        \foreach \nx/\ny in {2.0/2.0, 2.6/1.4, 2.0/0.8, 1.4/1.4}
            \fill[black] (\nx,\ny) circle (1.5pt);
       
        \node[circle, fill=airforceblue, inner sep=2pt] (start_grid) at (2.0, 1.4) {};
       
        \draw[black] (start_grid) -- (2.0, 2.0); 
        \draw[black] (start_grid) -- (2.6, 1.4); 
        \draw[black] (start_grid) -- (2.0, 0.8); 
        \draw[black] (start_grid) -- (1.4, 1.4); 
       
        \node[below, airforceblue, font=\footnotesize] at (2.2, 1.9) {$X$};
        \node[align=right, font=\scriptsize, airforceblue] at (2, 0) {Neighbor differs by \textbf{1 entry}};

        \node[align=center, font=\small] at (8.5, 4.2) {\textbf{Matrix Graph}\\(Dense Rank-1 Update)};
       
        \shade[ball color=gray!10, opacity=0.3] (8.5, 1.8) ellipse (2.5cm and 1.2cm);
        \draw[gray] (8.5, 1.8) ellipse (2.5cm and 1.2cm);
        \draw[gray] (6, 1.8) arc (180:360:2.5 and 0.4);
        \draw[gray, dashed] (11, 1.8) arc (0:180:2.5 and 0.4);

        \node[circle, fill=crimson, inner sep=2pt] (center) at (8.5, 2.2) {};
        \node[above, crimson, font=\footnotesize] at (8.7, 2.4) {$X$};
       
        \foreach \angle in {0, 45, ..., 315} {
            \node[circle, fill=black, inner sep=1pt] (n\angle) at ($(center)+(\angle:0.75cm)$) {};
            \draw[thin, gray] (center) -- (n\angle);
        }
           
        \node[align=center, font=\scriptsize, crimson] at (8.5, 0) {Neighbor differs\\by \textbf{$n$ entries} (a row)};

        \node[anchor=west, font=\small] at (0, -0.7) {\textbf{Execution in Rounds:}};
       
        \draw[thick] (1, -1.5) -- (4, -1.5);
        \foreach \x in {1.2, 1.6, 2.0, 2.4, 2.8, 3.2, 3.6} \draw[->, >=latex] (\x, -1.0) -- (\x, -1.5);
        \node[align=center, font=\footnotesize] at (2.5, -2.0) {Round $t$: \\ Batch of Queries};

        \draw[->, thick, double] (4.2, -1.5) -- (5.8, -1.5);
        \node[above, font=\footnotesize] at (5, -1.5) {Update};

        \draw[thick] (6, -1.5) -- (9, -1.5);
        \foreach \x in {6.2, 6.6, 7.0, 7.4, 7.8, 8.2, 8.6} \draw[->, >=latex] (\x, -1.0) -- (\x, -1.5);
        \node[align=center, font=\footnotesize] at (7.5, -2.0) {Round $t+1$: \\ Batch of Queries};

    \end{tikzpicture}
    \caption{\emph{Left:} In the Grid graph, a neighbor of $X$ differs from $X$ in a single coordinate. Right: In the Matrix graph, a neighbor of $X$ differs from $X$ in multiple coordinates (specifically, an entire row). This dense connectivity results in a high-expansion graph, distinct from a grid. \emph{Bottom:} To mitigate high probe latency, the algorithm queries batches in parallel.}
    \label{fig:motivation}
\end{figure}

\subsection{Graph Features and Complexity of Local Search}

To build intuition on the relation between the geometry of the graph and query complexity, consider the impact of the \emph{Minimum Vertex Cover}. 
If graph $G=(V,E)$ has a minimum vertex cover\footnote{A set $C$ of vertices is a vertex cover if every edge in $E$  is incident to at least one vertex in $C$.} $C$, then we can find a local minimum in just two rounds and $|C| + \Delta$ queries:

%
\begin{itemize}
    \item \emph{Round 1:} Query all vertices in the minimum vertex cover $C$. Let $Z$ be the vertex in $C$ with the minimum observed value.
    \item \emph{Round 2:}  Query all the neighbors of $Z$ in $V\setminus C$. If $Z$ is a local minimum, output it. Otherwise, one of the neighbors of $Z$ must be a local minimum, return it.
\end{itemize}
If $Z$ is smaller than all its neighbors, it is a local minimum. Otherwise, let $w$ be the neighbor of $Z$ with the smallest function value $f(w) < f(Z)$. We claim $w$ is a local minimum. If $w$ had a neighbor $u$ with $f(u) < f(w)$, then $u$ cannot be in $C$ (otherwise $Z$ would not be the minimum in $C$). Thus $u \in V \setminus C$. However, this implies the edge $(u,w)$ connects two vertices in $V \setminus C$, contradicting the definition of a vertex cover (which must cover every edge). Thus, $w$ is a local minimum. The query complexity is $c+\Delta$, where $\Delta$ is the maximum degree of the graph.

This example illustrates how geometric structure can enable pruning the search space. In the remainder of the paper, we generalize this intuition to $t$-round algorithms, bounding the query complexity via the \emph{separation number} of the graph and giving randomized lower bounds. 

\subsection{Model} \label{sec:model}

Let $G = (V,E)$ be a connected undirected  graph and $f : V \to \mathbb{R}$ a function.  
A vertex $v \in V$  is a local minimum if $f(v) \leq f(u)$ for all $\{u,v\} \in E$.
We write $V = [n] = \{1, \ldots, n\}$.
Given as input a graph $G$ and oracle access to an unknown function $f$, the local search problem is to find a local minimum of $f$ on $G$ using as few queries as possible. {Each query is of the form: ``Given a vertex $v$, what is $f(v)$?''}.

 Let $t$ be an upper bound on the number of rounds of interaction with the oracle. An algorithm running in $t$ rounds  submits in each round $j$ a set of queries,  waits for the answers, and then submits the set of queries for round $j+1$ \footnote{That is, the choice of queries submitted in round $j$ can only depend on the results of queries from earlier rounds.}.
At the end of the $t$-th round, the algorithm  stops and outputs a solution.

\paragraph{Query Complexity.} The \emph{deterministic query complexity} is the total number of queries necessary and sufficient for an optimal deterministic algorithm to find a solution on a worst case input function.
The \emph{randomized query complexity} is the minimum worst-case number of queries required by a randomized algorithm to compute the function with probability at least $9/10$ for every input.

\paragraph{Graph Features.}

Let $\Delta$ denote the maximum degree of the graph $G$. 
For each $u,v \in V$, let $dist(u,v)$ be the length of the shortest path from $u$ to $v$.

Let ${1}/{2} \le \alpha < 1$ be a real number, $s \in \mathbb{N}$, and $G=(V,E)$ a graph.
A subset $S \subseteq V$ is an \emph{$(s, \alpha)$-separator} of $G$, if there exist disjoint subsets $A, B \subseteq V$ such that the next properties hold:
(i) $V = A \cup B \cup S$; 
(ii) $|S| \le s$ and $|A|, |B| \le \alpha|V|$; and 
(iii) $E(A,B) = \emptyset$.

The \emph{separation number $s(G)$} of $G$ is the smallest $s$ such that all subgraphs $G'$ of $G$ have an $(s, 2/3)$-separator.
The separation number is within a constant factor of the treewidth. For additional discussion, see   chapter 7 of \cite{Parameterized_algos_book} and \cite{graph_notions_survey}.

\subsection{Our Results} \label{sec:our_results}

First, we consider a divide-and-conquer approach based on recursively finding separators of the initial graph $G$ and has good performance when the separation number is sublinear in $n$.

\begin{theorem} \label{thm:k_rounds}
Let $G = (V, E)$ be a connected undirected graph with $n$ vertices.
The  deterministic query complexity of finding a local minimum on $G$ in $t \ge 2$ rounds is at most $\min(4t   n^{\frac{1}{t}}  (s\Delta)^{1-\frac{1}{t}},n  )$,
where  $\Delta$ is the maximum degree and $s$ is the separation number of $G$.
\end{theorem}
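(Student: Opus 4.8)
The plan is to build a deterministic $t$-round algorithm by divide-and-conquer: in each of the first $t-1$ rounds we query a small ``breaking set'' that confines the search to a component of the graph smaller by a multiplicative factor $\rho$, and in round $t$ we query whatever component remains in full. The workhorse is a separator fact: since the separation number is within a constant factor of the treewidth, for every (induced) subgraph $H$ of $G$ on $m$ vertices and every integer $k$ there is a set $T\subseteq V(H)$ with $|T| = O(sm/k)$ such that every connected component of $H-T$ has fewer than $k$ vertices. The delicate point is that this holds \emph{without} a logarithmic factor (plain recursive $2/3$-separation would lose one): one takes a tree decomposition of width $O(s)$, assigns each vertex to one bag, and repeatedly cuts off a bag whose subtree has accumulated at least $k$ vertices, so each cut (a bag of $O(s)$ vertices) "resolves" $\geq k$ vertices and there are $O(m/k)$ cuts. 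I also keep in reserve the trivial one-round algorithm that queries all $n$ vertices, which a short computation shows is within the claimed bound whenever $(n/(s\Delta))^{1/t}<2$; so I may assume $\rho := \lceil (n/(s\Delta))^{1/t}\rceil \ge 2$.

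\emph{The algorithm.} Put $r_j := n/\rho^{j}$. Before round $j$ the algorithm holds a connected ``active'' set $R_{j-1}$ with $|R_{j-1}|\le r_{j-1}$ and an ``entry point'' $w_{j-1}\in R_{j-1}$ maintaining the invariant that $f(u) > f(w_{j-1})$ for every $u\notin R_{j-1}$ having a neighbor in $R_{j-1}$; initially $R_0=V$ and this is vacuous. In round $j\le t-1$: apply the separator fact to $H=G[R_{j-1}]$ with $k=r_j$ to obtain $T_j$ with $|T_j|=O(s\rho)$; query $T_j$; let $Z_j$ minimize $f$ over $T_j\cup\{w_{j-1}\}$ (over $T_1$ alone when $j=1$); query all neighbors of $Z_j$; if $Z_j$ beats all of them, output $Z_j$ and stop; otherwise let $w_j$ be a neighbor of $Z_j$ with $f(w_j)<f(Z_j)$, let $R_j$ be the component of $G[R_{j-1}\setminus T_j]$ containing $w_j$ (so $|R_j|<r_j$), and continue. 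In round $t$: query all of $R_{t-1}$ and run steepest descent from $w_{t-1}$ inside $G[R_{t-1}]$ (all relevant values are now known), outputting the vertex where it halts. The algorithm may also stop early by querying a small current $R_{j-1}$ in full.

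\emph{Correctness} hinges on the invariant. The vertex $w_j$ lies in $R_{j-1}$: it is not in $T_j$ and not equal to $w_{j-1}$ since $f(w_j)<f(Z_j)=\min\{f(v): v\in T_j\cup\{w_{j-1}\}\}$, and it cannot lie outside $R_{j-1}$, since the step-$(j-1)$ invariant would then force $f(w_j)>f(w_{j-1})\ge f(Z_j)$; hence $w_j$ sits in a component $R_j$ of $G[R_{j-1}\setminus T_j]$. The invariant is preserved: any $u\notin R_j$ adjacent to $R_j$ lies in $T_j$ (because $R_j$ is a full component of $G$ restricted to $R_{j-1}\setminus T_j$) or outside $R_{j-1}$, and in either case $f(u)\ge f(Z_j)>f(w_j)$, the second case using the previous invariant. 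Finally, any strictly decreasing walk from $w_{t-1}$ has all values $\le f(w_{t-1})$, while every vertex bordering $R_{t-1}$ from outside exceeds $f(w_{t-1})$; so steepest descent from $w_{t-1}$ never exits $R_{t-1}$, and the vertex $v^{*}$ where it halts beats all neighbors inside $R_{t-1}$ and, being of value $\le f(w_{t-1})$, all neighbors outside as well, hence is a genuine local minimum of $f$ on $G$. The same boundary comparison justifies the early outputs $Z_j$.

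\emph{The count, and the main obstacle.} The total is $\sum_{j=1}^{t-1}(|T_j|+\deg(Z_j)) + |R_{t-1}| \le \sum_{j=1}^{t-1}(O(s\rho)+\Delta) + r_{t-1}$. With $\rho=\lceil(n/(s\Delta))^{1/t}\rceil$ we have $\rho\le 2(n/(s\Delta))^{1/t}$, hence $s\rho = O\!\big(n^{1/t}(s\Delta)^{1-1/t}/\Delta\big)$ and $r_{t-1}=n/\rho^{t-1}\le \rho s\Delta = O\!\big(n^{1/t}(s\Delta)^{1-1/t}\big)$; summing and using $t\ge2$, $s,\Delta\ge1$, $\Delta\le n$, each of the three parts is a bounded fraction of $t\,n^{1/t}(s\Delta)^{1-1/t}$, and tracking the (absolute) constants gives the bound $4t\,n^{1/t}(s\Delta)^{1-1/t}$; combining with the trivial algorithm yields the $\min$ in the statement. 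I expect the main obstacle to be exactly the log-free form of the separator fact, since a spurious $\log$ factor there would spoil the exponents; once that is in hand, the only remaining subtlety is choosing the right invariant — a shrinking active region together with a low-valued entry point whose value dominates the region's outer boundary — after which the verification is routine bookkeeping.
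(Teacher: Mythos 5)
Your shattering ingredient is essentially the paper's Lemma (the Shattering Lemma): for any $K$, a set $S$ with $|S| = O(sn/K)$ whose removal leaves components of size at most $K$, and crucially with no logarithmic loss. The paper proves this by recursive $(s,2/3)$-separation plus a charging argument; your route via a tree decomposition (cutting off bag-subtrees once they accumulate $K$ vertices) gives the same log-free bound and is a fine alternative. The algorithmic shell, however, is genuinely different from the paper's: you maintain a single shrinking active region $R_j$ with a low-valued entry point $w_j$ dominating its outer boundary, whereas the paper precomputes a $(t-1)$-level hierarchical decomposition and in each round explores the separators of \emph{all} sub-components adjacent to the current running minimum. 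Your invariant-based correctness argument is clean and, if the algorithm actually ran in $t$ rounds, your per-round cost $O(s\rho)+\Delta$ would be \emph{smaller} than the paper's $O(s\rho\Delta)$ by roughly a factor $\Delta$.

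The trouble is that your algorithm does not run in $t$ rounds. Inside what you call ``round $j$'' you submit $T_j$, wait for the answers, \emph{use those answers} to compute $Z_j = \argmin_{T_j\cup\{w_{j-1}\}} f$, and only then submit the batch $N(Z_j)$. Under the paper's model (\cref{sec:model}), that is two rounds of interaction, not one, so the procedure as described uses about $2t-1$ rounds. The query $N(Z_j)$ is load-bearing: you need it both to potentially halt early on $Z_j$ and, more importantly, to pick the neighbor $w_j$ that determines \emph{which single} component $R_j$ to descend into. Without it, the algorithm cannot commit to one component. The natural fix is to not commit: carry along all (at most $\Delta$) components of $G[R_{j-1}\setminus T_j]$ adjacent to $Z_j$ and shatter each in the next round, deferring the local-minimum check to a final case analysis on the overall queried minimum. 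That is precisely the paper's algorithm, and it is why the paper's per-round cost carries the extra factor of $\Delta$ that your count omits. Alternatively, you could query $T_j\cup N(T_j)$ upfront in round $j$ so that $N(Z_j)$ is already known, but then the per-round cost becomes $O(s\rho\Delta)$ and you land on the same expression as the paper anyway. A secondary (and easily fixed) issue: your fallback threshold ``$(n/(s\Delta))^{1/t}<2$'' is too generous — for $t\ge 6$ there are parameter ranges in that regime where $n > 4t\,n^{1/t}(s\Delta)^{1-1/t}$, so the trivial algorithm does not meet the claimed $\min$; the correct switch-over is $n \le c\, s\Delta$ for a fixed constant $c$ (the paper uses $c=3$), in which case $(n/(s\Delta))^{1-1/t}\le c$ and the trivial bound suffices.
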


For graphs with large separation number, a randomized approach that uses parallel steepest descent with a warm start can be better. This is based on the classical steepest descent with a warm start method~\cite{aldous1983minimization}.

\begin{proposition} \label{thm:t_rounds_const_deg_generalized}
    Let $G=(V,E)$ be a graph with $n$ vertices and maximum degree $\Delta$. The randomized query complexity of finding a local minimum in $t \ge 2$ rounds is $O(\sqrt{n} + t)$ when $\Delta \leq 2$ and $O\bigl(\frac{n}{t \cdot \log_\Delta n} + t \Delta^2 \sqrt{n}\bigr)$ when $\Delta \geq 3$.
\end{proposition}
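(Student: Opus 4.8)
The plan is to prove both bounds via \emph{parallel steepest descent with a warm start}, where the descent phase is parallelised by querying, at each step, an entire ball around the current vertex (instead of a single neighbourhood) and jumping to the best vertex inside it; the ball radius is tuned to trade the per-round cost against the number of descent rounds. When $\Delta \le 2$ the graph $G$ is a path or a cycle and a deterministic two-round scheme already gives the bound: in round $1$ query $\Theta(\sqrt n)$ roughly equally spaced vertices (on a path, also the two endpoints) and let $v^\star$ be the one of smallest value; in round $2$ query every vertex lying in one of the (at most two) segments of $G$ delimited by $v^\star$ and its two neighbouring samples, and output the smallest vertex $w$ seen overall. Since $f(w)\le f(v^\star)\le f(y)$ for every sample $y$, a neighbour $u$ of $w$ with $f(u)<f(w)$ would have to lie strictly between two consecutive samples bracketing $v^\star$, hence inside the region queried in round $2$; but every such vertex has value $\ge f(w)$, a contradiction, so $w$ is a local minimum. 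This costs $O(\sqrt n)$ queries in $2\le t$ rounds, i.e.\ $O(\sqrt n+t)$.

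For $\Delta\ge 3$, if $n\le\Delta^2$ just query all $n$ vertices in one round (at most $\Delta^2\sqrt n$ queries). Otherwise set $r:=\lceil\tfrac12\log_\Delta n\rceil\ge 1$, so every ball of radius $r$ has at most $2\Delta^r\le 2\Delta^2\sqrt n$ vertices. In round $1$, query $\ell:=\Theta\!\bigl(\tfrac{n}{(t-1)\,r}\bigr)$ i.i.d.\ uniform vertices and let $v_0$ be the one of smallest value. Rounds $2,\dots,t$ maintain a current center, starting from $v_0$: in each such round, query the ball $B$ of radius $r$ around the center, run ordinary steepest descent inside the induced subgraph $G[B]$ from the center, and let $z$ be the vertex where it stops (a local minimum of $G[B]$); if $z$ lies at depth $\le r-1$ from the center --- equivalently, all of $z$'s $G$-neighbours were queried, certifying $z$ as a local minimum of $G$ --- output $z$, and otherwise recurse with $z$ as the new center (in the very last round, output $z$ regardless).

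For correctness, assume $f$ injective (break ties by index) and set $\Phi(v):=|\{u:f(u)<f(v)\}|$. Two facts drive the analysis. \emph{(Warm start.)} Ordering the vertices by value, $v_0$ has rank $J$ with $\Pr[J>k]\le(1-k/n)^\ell\le e^{-k\ell/n}$, so for a suitable constant inside $\ell$ we get $\Phi(v_0)\le\tfrac12(t-1)r$ with probability $\ge 9/10$; condition on this event. \emph{(Potential drop per recursion.)} Whenever the algorithm recurses from a center $c$ to a new center $c'=z$, the vertex $c'$ is a local minimum of $G[B]$ lying at depth exactly $r$ from $c$, hence the steepest-descent path from $c$ to $c'$ inside $G[B]$ has length $\ge r$; its vertices other than $c$ are then $\ge r$ distinct vertices whose values lie in $[\,f(c'),\,f(c)\,)$, so $\Phi(c')\le\Phi(c)-r$. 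Consequently, after at most $t-2$ recursions the current center $c$ satisfies $\Phi(c)<r$, which forces the steepest descent from $c$ inside $G[B]$ to coincide with a genuine (global) steepest descent of length $<r$, hence to stop at a true local minimum of $G$ at depth $<r$ --- which is output. Thus the algorithm halts within $t$ rounds and returns a local minimum, and the total number of queries is $\ell+(t-1)\cdot O(\Delta^2\sqrt n)=O\!\bigl(\tfrac{n}{t\log_\Delta n}+t\Delta^2\sqrt n\bigr)$, with success probability $\ge 9/10$.

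The non-routine ingredient, and the step I expect to be the main obstacle, is the potential-drop lemma --- in particular the observation that the only way the algorithm can be \emph{forced to recurse} past a center is for the in-ball descent to run for at least $r$ steps and thus strictly descend past $\ge r$ distinct vertices. This is exactly what lets each (expensive) ball-query round buy $r$ units of progress rather than $1$, upgrading the naive first term $O(n/t)$ to $O(n/(t\log_\Delta n))$. The rest is bookkeeping: bounding ball sizes by $O(\Delta^2\sqrt n)$, verifying that ``steepest descent in $G[B]$'' agrees with genuine steepest descent as long as it stays at depth $\le r-1$ (so that a shallow endpoint is a true local minimum of $G$), and fixing constants so that the $t-1$ descent rounds --- the last of which needs only $\Phi<r$ --- suffice even for $t=2$.
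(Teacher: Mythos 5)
Your proposal is correct, and for $\Delta\ge 3$ it follows essentially the same high-level plan as the paper (Algorithm~\ref{alg:wssd}: one round of random warm start followed by $t-1$ rounds of ball-queries around a moving center), but you execute the analysis somewhat differently. The paper queries balls of radius $r+1$ specifically so that each round simulates $r$ steps of the \emph{genuine} steepest-descent path, then applies Markov's inequality to $\E[\mathcal{L}(v^{(0)})]<n/(q_1+1)$ (Lemma~\ref{lem:expected_rank_replacement}). You instead query balls of radius $r$, run steepest descent inside the induced subgraph $G[B]$ (which need not coincide with the global descent once it touches the boundary), and replace path-simulation by a potential argument: if the in-ball descent is forced to recurse to a depth-$r$ local minimum of $G[B]$, it must have traversed $\ge r$ strictly decreasing vertices, so the rank $\Phi$ drops by at least $r$ per recursion; combined with a direct tail bound $\Pr[J>k]\le(1-k/n)^{\ell}$ in place of Markov, this closes the argument. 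Your potential lemma is arguably a bit cleaner because it avoids the need to couple the in-ball and global descent paths exactly. For $\Delta\le 2$ you take a genuinely different (and simpler) route: rather than re-running the randomized warm-start algorithm, you give a deterministic two-round grid-search on the path/cycle (query $\Theta(\sqrt n)$ evenly spaced vertices, then scan the at most two arcs around the best one), which trivially yields $O(\sqrt n)$ and dispenses with the randomization entirely. Both treatments are sound; the minor bookkeeping in your $\Delta\ge 3$ case (e.g., the $\tfrac12(t-1)r$ slack to cover the last round, and the $n\le\Delta^2$ corner case) is handled correctly.
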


The upper bound is  $O(n/\log{n})$ in two rounds for graphs with bounded maximum degree. The algorithm analyzed in Proposition~\ref{thm:t_rounds_const_deg_generalized} resembles the one for general graphs in \cite{zhang2009tight} (section 5)   and the fractal-like steepest descent from \cite{branzei2022query} for the $d$-dimensional grid.

We also obtain the following lower bound.

\begin{theorem} \label{thm:lb_t_rounds}
    Let $G = (V,E)$ be a connected undirected graph with $n$ vertices. The randomized query complexity of finding a local minimum on $G$ in $t \in \mathbb{N}^*$ rounds is $\Omega(t n^{1/t} - t)$.
\end{theorem}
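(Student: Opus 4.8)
The plan is to apply Yao's minimax principle: I will exhibit a distribution over input functions $f : V \to \mathbb{R}$ such that every deterministic $t$-round algorithm issuing $q_i$ queries in round $i$ fails with probability more than $1/10$ unless $\prod_{i=1}^{t}(q_i+1) = \Omega(n)$. Given this ``product bound'', the theorem follows from a short AM--GM computation: writing $Q = \sum_i q_i$, we have $\prod_i (q_i+1) \le ((Q+t)/t)^t$, so $\prod_i(q_i+1) = \Omega(n)$ forces $Q \ge t\,n^{1/t} - t$ up to constants, which is exactly the claimed bound. The boundary case $t=1$, giving $\Omega(n)$, can be argued directly (any unqueried vertex can be forced to be a strictly smaller neighbor of the output).

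The core is the case where $G$ is a path on $m = \Theta(n)$ vertices; I reduce every other connected graph to this case or to its tree analogue. On a path I use a recursive ``staircase''. Let $b = \lceil n^{1/t}\rceil$; the secret is an address $p = (p_1,\ldots,p_t) \in \{0,\ldots,b-1\}^t$ drawn uniformly, naming a vertex $v(p)$ by its base-$b$ digits. The function $f_p$ is designed so that (i) $f_p$ has a \emph{unique} local minimum, at $v(p)$, and is otherwise ``monotone toward'' $v(p)$ at every scale; and (ii) a query at a vertex $x$ reveals only the first coordinate $j$ on which the digits of $x$ and $p$ disagree, together with one comparison bit (is $x$'s $j$-th digit above or below $p$'s). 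Property (ii) forces the coordinates to be learned one per round --- nothing about $p_j$ is revealed before $p_1,\ldots,p_{j-1}$ are pinned down --- and, within a round, recovering $p_j$ from $q_j$ non-adaptive threshold queries shrinks the surviving block of candidate addresses by a factor of at most $q_j+1$ (the $q_j$ thresholds cut the current block into at most $q_j+1$ pieces and the adversary keeps the largest). Hence after $t$ rounds at least $n/\prod_i(q_i+1)$ addresses remain equally likely, so if this exceeds a constant the algorithm outputs the true minimum with probability below $9/10$, establishing the product bound. Making $f_p$ satisfy (i) and (ii) simultaneously at all $t$ scales --- in particular checking that the ``seams'' between consecutive scales create no extra local minimum --- is the most technical part of this step.

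For an arbitrary connected $G$, I would take a spanning tree $T$ and run the same argument along $T$. If $T$ contains a path on $\Theta(n)$ vertices we are done by planting the staircase on that path and setting $f_p$ large and increasing on the remaining vertices, so that non-tree edges of $G$ only ever connect a vertex to larger-valued neighbors and create no spurious minima. Otherwise $T$ is ``bushy'', and I replace ``which half of the interval'' by ``which of the $\le b$ branches'' at each level of a balanced depth-$t$ refinement of $T$: the adversary answers every query with ``the minimum lies on none of the branches you touched'', so touching $q_i$ branches at a level still leaves a $1/(q_i+1)$-fraction of the candidates viable --- again the product bound, with the very bushy extreme (e.g.\ a star, where the complexity is already $\Omega(n)$) handled by the cruder direct argument. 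The main obstacle I anticipate is exactly this unification: producing, for every connected graph --- including high-degree and long-path-free ones --- a single hierarchical hard instance with no spurious local minima and with the ``one coordinate per round / factor $q_i+1$ per round'' property. The high-degree case is what makes the naive ``delete the queried vertices and recurse into the largest remaining component'' adversary fail (a single query can shatter a connected candidate region into $\Delta$ pieces), and it is what forces the ``defer all commitments'' adversary instead. I would also treat the regime $n^{1/t}$ close to $1$ separately, since the AM--GM step loses a constant factor there that the ``$-t$'' term in the statement is precisely there to absorb.
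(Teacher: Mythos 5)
Your high-level blueprint---Yao's minimax, a spanning tree $T$, a ``$q$ queries split the candidate set into at most $q+1$ pieces'' bound, then AM--GM---matches the paper's, and your intuition to reduce to a tree is exactly right. But you flag the unification step as your main open obstacle, and indeed you never resolve it: the paper dissolves it with a much simpler construction than your base-$b$ staircase. Fix any spanning tree $T$ rooted at $r$, sample the hidden minimum $Z$ uniformly from $V$, and set $f_Z(x) = -dist_T(r,x)$ if $x$ is an ancestor of $Z$ in $T$ and $+dist_T(r,x)$ otherwise. There is no case split between long-path and bushy trees, no seams between scales to create spurious minima, and no need to engineer ``one coordinate revealed per round''. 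The $q+1$ outcome bound comes from a one-line observation (the paper's Lemma~\ref{lem:signatures}): the ancestors of any vertex form a root-to-vertex path in $T$, hence are totally ordered, so a batch $Q$ of queries can induce at most $|Q|+1$ distinct signatures on any candidate. This single fact replaces your entire multiscale design.

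The second, and more substantive, gap is in how you move from ``each round shrinks by a factor $\leq q_i+1$'' to a bound on $\sum_i q_i$. Your product bound $\prod_i(q_i+1) = \Omega(n)$ is not well-posed for an adaptive algorithm: $q_i$ is a function of the history, not a scalar. If you take the maximum of $q_i$ over histories you overcount the total queries across branches; if you take the values along one history you only control the failure probability \emph{conditional} on that branch, which may be a rare event. The paper addresses this with an expectation-level argument: defining shrinkage factors $\rho_i(Z) = |\mathcal{C}_{i-1}(Z)|/|\mathcal{C}_i(Z)|$, it proves the identity $\E_Z[m_i(\mathcal{C}_{i-1}(Z))] = \E_Z[\rho_i(Z)]$ relating expected branching to expected shrinkage, applies AM--GM \emph{pointwise} to $\sum_i \rho_i(Z) \geq t\,(n/|\mathcal{C}_t(Z)|)^{1/t}$ so the product telescopes, and then lower-bounds $\E_Z[|\mathcal{C}_t(Z)|^{-1/t}]$ by solving a small concave minimization over the sizes of the final candidate sets subject to the success-probability constraint $|\mathcal{R}_t| \geq \lceil cn\rceil$. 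Without this bookkeeping your AM--GM step does not compose correctly with adaptivity, so the proposal as written has a real hole even though the eventual constants and the $t\,n^{1/t}-t$ shape you target are correct.
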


The constant hidden in $\Omega$ is independent of $n$ and $t$. The proof of Theorem~\ref{thm:lb_t_rounds} shows a lower bound of $\Omega(c \cdot t n^{\frac{1}{t}} - t)$ for each success probability $c \in (1/n, 1]$.

The proof of Theorem~\ref{thm:lb_t_rounds} fixes a spanning tree of the graph, rooted at some vertex $r$. Then it defines a hard distribution using a family of functions based on  a staircase construction. A vertex $Z$ is chosen uniformly at random from $V$ to represent the target local minimum. Given a choice of $Z$, for each vertex $v \in V$, let $dist_T(r,v)$ represent the distance in $T$ between $r$ and $v$. If $v$ is on the path from $Z$ to $r$, then the function is defined as $f(v) = -dist_T(r,v)$. Otherwise, $f(v) = dist_T(r,v)$.

Analyzing this distribution provides a lower bound of $\Omega(t n^{1/t} - t)$. By ensuring that every round reduces the search space to a sub-tree---leaving a smaller instance of the original problem---we can employ a clean inductive argument.

\section{Related Work}

\paragraph{The Boolean Hypercube and Grids.}
The query complexity of local search was first studied theoretically by Aldous~\cite{aldous1983minimization}, who analyzed the Boolean hypercube $\{0,1\}^n$. He analyzed the ``steepest descent with a warm start'' heuristic, showing it requires $O(\sqrt{n} \cdot 2^{n/2})$ queries, and established a nearly matching lower bound of $\Omega(2^{n/2-o(n)})$ using a random walk construction based on hitting times. This lower bound was subsequently refined by Aaronson~\cite{Aaronson06} via the relational adversary method, and later tightened by Zhang~\cite{zhang2009tight} to $\Theta(\sqrt{n} \cdot 2^{n/2})$. For deterministic algorithms, Llewellyn, Tovey, and Trick~\cite{llewellyn1989local} provided a divide-and-conquer strategy that yields an upper bound of $O(2^n \log n / \sqrt{n})$ on the hypercube.

For the $d$-dimensional grid $[n]^d$, randomized lower bounds were established by Aaronson~\cite{Aaronson06} and Zhang~\cite{zhang2009tight}, with the latter proving a tight $\Omega(n^{d/2})$ bound for constant $d \geq 4$. Sun and Yao~\cite{sun2009quantum} resolved remaining gaps for low dimensions ($d=2,3$) and quantum settings.

\paragraph{General Graphs.}
For arbitrary graphs, complexity is often characterized by structural invariants.  Santha and Szegedy~\cite{santha2004quantum} utilized the graph's \emph{separation number} $s$ to derive a quantum lower bound of $\Omega(\sqrt[8]{s/\Delta}/\log n)$ and a deterministic upper bound of $O((s+\Delta)\log n)$.
Other works have linked query complexity to topological features such as genus~\cite{Verhoeven06} and diameter in Cayley and vertex-transitive graphs~\cite{dinh2010quantum}. More recently, lower bounds have been derived from spectral properties, including graph congestion~\cite{BCR23} and mixing time~\cite{BR_2024spectrallowerboundslocal} of the fastest mixing Markov chain for the given graph.

\paragraph{Local Search in Rounds and Distributed Settings.}
The specific setting of parallel query rounds (adaptivity) was analyzed by Br\^anzei and Li~\cite{branzei2022query} for grid graphs, providing bounds for both constant and polynomial number of rounds. In the distributed setting, Babichenko, Dobzinski, and Nisan~\cite{babichenko2019communication} studied the communication complexity of finding local minima, which captures settings in the cloud, where data is held by different parties.

\paragraph{Stationary Points and Adaptive Complexity.}
Discrete local search is closely related to finding approximate stationary points in continuous optimization (i.e., points $x$ where $\|\nabla f(x)\| \leq \epsilon$). While gradient descent is inherently sequential, recent work has investigated the limits of parallelization in this domain. For examples of works on algorithms and complexity of computing approximate stationary points, see, e.g., \cite{vavasis1993black,pmlr-v119-zhang20p,  stationary_I,stationary_II,bubeck2020trap,pmlr-v119-drori20a}).

Zhou et al.~\cite{Zhou2025adaptive} explicitly analyzed the ``adaptive complexity'' of finding stationary points. They demonstrated a dichotomy based on dimension: in high-dimensional settings ($d \approx \text{poly}(1/\epsilon)$), parallelization offers no asymptotic benefit over sequential methods. Conversely, for constant dimensions, they developed an algorithm bridging grid search and gradient flow trapping~\cite{bubeck2020trap} that achieves near-optimal query complexity in constant rounds. Their lower bound analysis for constant dimensions relies on a reduction to the discrete local search problem on grid graphs, highlighting the tight connection between these continuous and discrete models.

\section{Algorithms} \label{sec:algorithms}

In this section we study the query complexity of local search as a function of the separation number of the graph. Lower bounds on the query complexity of local search via separation number were first provided by \cite{santha2004quantum} for fully adaptive algorithms.

\subsection{Deterministic Algorithm}

We use  a recursive application of the separation property, summarized by the next folklore lemma.

\begin{lemma}[Shattering Lemma] \label{lem:shattering}
Let $G=(V,E)$ be a graph with $n$ vertices and separation number $s$. For any parameter $K \in [1, n]$, there exists a subset of vertices $S \subseteq V$  such that every connected component of the induced subgraph $G[V \setminus S]$ has size at most $K$, and  $|S| < 3sn/K$.
\end{lemma}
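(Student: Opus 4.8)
The plan is to apply the $(s, 2/3)$-separator guarantee recursively, building a "separator tree" whose internal nodes carry separators and whose leaves are the small components. First I would set up the recursion: given any subgraph $H$ of $G$ with more than $K$ vertices, the definition of separation number gives an $(s, 2/3)$-separator $S_H$ of $H$, splitting $V(H) \setminus S_H$ into two parts $A_H, B_H$ each of size at most $\tfrac{2}{3}|V(H)|$ with no edges between them. I recurse on $H[A_H]$ and $H[B_H]$ (or rather on their connected components), stopping whenever the current piece has at most $K$ vertices. The final separator set $S$ is the union of all the $S_H$ collected at internal nodes of this recursion tree; by construction every connected component of $G[V \setminus S]$ sits inside some leaf piece and hence has size at most $K$. (One should note that taking induced subgraphs only shrinks things, so the "all subgraphs have a separator" clause is exactly what licenses the recursive calls.)

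The quantitative heart is bounding $|S|$. Since each split reduces the size of a piece by a factor of at least $2/3$, any piece at depth $d$ in the recursion tree has at most $n \cdot (2/3)^d$ vertices; so the recursion terminates, and more importantly, every piece that is still internal (i.e.\ gets a separator of size $\le s$ charged to it) has more than $K$ vertices. The clean way to finish is a charging/summation argument: at each depth level $d$, the pieces are vertex-disjoint subsets of $V$, so there are at most $n/K$ internal pieces at that level (each has $> K$ vertices), contributing at most $sn/K$ to $|S|$. The maximum depth is $\log_{3/2}(n/K)$. That naively gives $|S| \le \tfrac{sn}{K}\log_{3/2}(n/K)$, which has an extra log factor and is \emph{not} what the lemma claims, so this crude per-level bound must be sharpened.

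The main obstacle, then, is shaving the logarithmic factor to get $|S| < 3sn/K$. The fix is to charge more carefully against the sizes of the pieces rather than counting levels. Let $\mathcal{I}$ be the set of all internal pieces (those of size $> K$ that receive a separator). I would prove $\sum_{H \in \mathcal{I}} |V(H)| \le 3n$, or something of that order, by a weighting argument: assign to each piece $H$ at recursion depth $d$ the weight $|V(H)|$, and observe that the two children of $H$ have total size $|A_H| + |B_H| \le |V(H)| - |S_H| \le |V(H)|$, so the total weight does not grow going down a level — but this gives $\le n$ per level again. The genuinely correct approach is the reverse: since each child has size at most $\tfrac{2}{3}|V(H)|$, summing $|V(H)|$ over all internal $H$ along any root-to-leaf chain is a geometric series bounded by $\tfrac{1}{1-2/3}\cdot n = 3n$ only if the chains were disjoint, which they are not. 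So instead I would bound $|\mathcal I| \le \tfrac{3n}{K}$ directly: order the internal pieces and use that they form a laminar family in which every member has $> K$ vertices and every node's children shrink geometrically — a laminar family on an $n$-element ground set in which each set has size $> K$ has at most... again this wants care. Realistically the cleanest rigorous route is: each internal piece $H$ has $|V(H)| > K$, the internal pieces at a fixed depth are disjoint, the depths range over $0,1,\dots, \lfloor \log_{3/2}(n/K)\rfloor$; to kill the log one restricts the separator-charging to the \emph{first} time a vertex's containing piece drops to size $\le (2/3)^j n$ for dyadic-in-base-$3/2$ thresholds and sums the resulting geometric series — giving $|S| \le s \cdot \tfrac{n}{K}\cdot \sum_j (3/2)^{-j}\cdot(\text{const}) < 3sn/K$. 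I would present this as: charge each separator $S_H$ to the vertices of $H$ (at rate $s/|V(H)| < s/K$ per vertex since $|V(H)|>K$), note every vertex lies in at most one internal piece per level but the piece sizes it lies in form a geometrically decreasing sequence, so the \emph{total} charge any vertex absorbs over its whole root-to-leaf path is $s\sum_{d\ge 0}\tfrac{1}{|V(H_d)|}$ where $|V(H_d)|\ge$ (size at that depth) — and since on the path the sizes strictly decrease by factor $3/2$, while all are $>K$ up to the last internal one, this sum telescopes to less than $s \cdot \tfrac{1}{K}\cdot\tfrac{1}{1-2/3} = \tfrac{3s}{K}$ per vertex, hence $|S| < 3sn/K$ overall. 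Verifying this telescoping carefully — in particular that the relevant denominators really do grow geometrically along each path — is the step I expect to require the most attention.
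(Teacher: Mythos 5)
Your final charging argument is exactly the paper's proof: charge each separator $R_0$ to the vertices of the piece $H$ it separates at rate $s/|V(H)|$ each, then observe that along any vertex's root-to-leaf path the piece sizes decrease by a factor of at least $3/2$ and all internal pieces have size $> K$, so the total charge per vertex is a geometric series anchored at the smallest (still $>K$) piece, bounded by $\frac{s}{K}\sum_{j\ge 0}(2/3)^j = 3s/K$. The meandering in the middle (laminar families, per-level counting, ``telescopes'' in place of ``sums a geometric series'') is unnecessary, but the argument you settle on is the same as the paper's and is correct.
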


As a warm-up, we first sketch an algorithm for two rounds and then extend it to any number of rounds. Let $K \in \{1, \ldots, n\}$ to be set later.
 Let $S$ be the separator guaranteed by Lemma~\ref{lem:shattering} with parameter $K$ and $\mathcal{C} = \{C_1, \ldots, C_m\}$  the set of connected components of $G[V \setminus S]$.

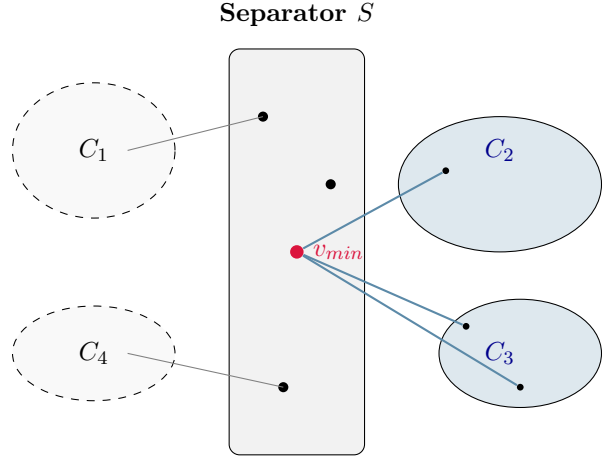
\begin{figure}[H]
    \centering
    \begin{tikzpicture}[scale=0.9, every node/.style={transform shape}]
        \draw[fill=gray!10, rounded corners] (3, 0) rectangle (5, 6);
        \node[align=center, font=\bfseries] at (4, 6.3) {Separator $S$\\};
       
        \node[circle, fill=black, inner sep=1.5pt] at (3.5, 5) {};
        \node[circle, fill=black, inner sep=1.5pt] at (4.5, 4) {};
        \node[circle, fill=black, inner sep=1.5pt] at (3.8, 1) {};
       
        \node[circle, fill=crimson, inner sep=2pt, label={[crimson]right:$v_{min}$}] (vmin) at (4, 3) {};
       
        \draw[dashed, fill=gray!5] (1, 4.5) ellipse (1.2cm and 1cm);
        \node at (1, 4.5) {$C_1$};
        \draw[thin, gray] (3.5, 5) -- (1.5, 4.5); 
       
        \draw[fill=airforceblue!20] (7, 4) ellipse (1.5cm and 1cm);
        \node[darkblue] at (7, 4.5) {$C_2$};
       
        \node[circle, fill=black, inner sep=1pt] (n1) at (6.2, 4.2) {};
        \draw[thick, airforceblue] (vmin) -- (n1);
       
        \draw[fill=airforceblue!20] (7.3, 1.5) ellipse (1.2cm and 0.8cm);
        \node[darkblue] at (7, 1.5) {$C_3$};
       
        \node[circle, fill=black, inner sep=1pt] (n2) at (6.5, 1.9) {};
        \draw[thick, airforceblue] (vmin) -- (n2);
        \node[circle, fill=black, inner sep=1pt] (n3) at (7.3, 1) {};
        \draw[thick, airforceblue] (vmin) -- (n3);
       
        \draw[dashed, fill=gray!5] (1, 1.5) ellipse (1.2cm and 0.7cm);
        \node at (1, 1.5) {$C_4$};
        \draw[thin, gray] (3.8, 1) -- (1.5, 1.5);

    \end{tikzpicture}
    \caption{Visual representation of the two-round algorithm. In Round 1, the separator $S$ is queried to find $v_{min}$. In Round 2, the algorithm only queries the components ($C_2, C_3$) containing neighbors of $v_{min}$.  Components ($C_1, C_4$) are not connected to $v_{min}$, so they are ignored.}
    \label{fig:2_round_algo}
\end{figure}
 
 \refstepcounter{algorithm} \label{alg:algorithm_two_rounds}
\noindent\textbf{Algorithm \thealgorithm.}

\begin{itemize}
\item \textbf{Round 1:} Query all vertices in $S$ and identify the global minimum among them, denoted  $v_{min} = \argmin_{v \in S} f(v)$.

\item \textbf{Round 2:} Identify the components $C_i$ that contain vertices adjacent to $v_{min}$ and query all the vertices in these components.

 \emph{Output:}
     If $f(v_{min}) \le f(u)$ for all $u \in N(v_{min})$, output $v_{min}$. Else,
    let $C^*$ the component in $\mathcal{C}$ that contains the smallest neighbor of $v_{min}$.
    Output $v^* = \argmin_{v \in C^*} f(v)$.
\end{itemize}
As we show later, the algorithm always outputs a local minimum. Setting $K \approx \sqrt{3sn/\Delta}$ ensures that the number of queries in round $1$ is $|S| \leq \sqrt{3sn\Delta}$ and the number of queries in round $2$ is at most $\Delta K \approx \sqrt{3sn\Delta}$, which leads to  $O(\sqrt{sn\Delta})$ queries in total.

The algorithm for $t$ rounds is a generalization of the two-round strategy. It  recursively invokes the shattering lemma and selects separator sizes to equalize the work done across different rounds.

\medskip

\noindent \textbf{Theorem~\ref{thm:k_rounds}.}
\emph{Let $G = (V, E)$ be a connected undirected graph with $n$ vertices.
The  deterministic query complexity of finding a local minimum on $G$ in $t \ge 2$ rounds is at most $\min(4t   n^{\frac{1}{t}}  (s\Delta)^{1-\frac{1}{t}},n  )$,
where  $\Delta$ is the maximum degree and $s$ is the separation number of $G$.}
\begin{proof}
Let $Q_t(G)$ denote the randomized query complexity of finding a local minimum on $G$.
We know $Q_t(G) \le n$. If $t=1$, then the query complexity is trivially at most $n$ and the formula holds. Assume $t \ge 2$.
We design a deterministic $t$-round algorithm using a $(t-1)$-level hierarchical decomposition based on the Shattering Lemma (Lemma~\ref{lem:shattering}). We define parameters $K_1, \ldots, K_{t-1} \in \mathbb{N}$ such that $n \ge K_1 \ge \ldots \ge K_{t-1} \ge 1$.
If multiple vertices have the same function value, we break ties in lexicographic order of vertex indices.

\newpage 

 \refstepcounter{algorithm} \label{alg:t_rounds}
\noindent\textbf{Algorithm \thealgorithm: Separator-based algorithm in $t$ rounds.}

\paragraph{\textit{Preprocessing Step (Hierarchical Decomposition).}}
\begin{enumerate}
    \item \textbf{Level 1:} Apply Lemma~\ref{lem:shattering} to $G$ with parameter $K_1$. This yields the primary separator $S_1$ and the set of connected components $\mathcal{C}_1$ of $G[V \setminus S_1]$.
    \item \textbf{Level $i$ ($2 \le i \le t-1$):} For each component $C_{i-1} \in \mathcal{C}_{i-1}$, apply Lemma~\ref{lem:shattering} to $G[C_{i-1}]$ with parameter $K_i$. This yields a separator $\sigma_i(C_{i-1}) \subseteq C_{i-1}$ and sub-components $\mathcal{C}_i(C_{i-1})$.
    Let $\mathcal{C}_i$ be the collection of all level-$i$ components \footnote{\emph{Remark:} If  $|C_{i-1}| \le K_i$ for some index $i$, then Lemma~\ref{lem:shattering} yields $\sigma_i(C_{i-1}) = \emptyset$, and the component remains intact.}.
\end{enumerate}

We define the level of a vertex $v$: Level$(v)=i$ if $v$ is in a level-$i$ separator ($1\le i \le t-1$), and Level$(v)=t$ if $v$ is in a final component $C \in \mathcal{C}_{k-1}$.

\paragraph{\textit{Execution.}}
Let $\mathcal{Q} := \emptyset$ denote the set of vertices queried so far by the algorithm. We maintain the running minimum $v_i$ among them.
\begin{itemize}
\item \textbf{Round 1:} Query $Q_1 = S_1$. Set $\mathcal{Q} \leftarrow Q_1$ and  $v_1 = \argmin_{v \in \mathcal{Q}} f(v)$.

\item \textbf{Round $i$ ($2 \le i \le t-1$):}
    \begin{enumerate}
                \item Let $\mathcal{A}_{i-1} \subseteq \mathcal{C}_{i-1}$ denote the set of level-$(i-1)$ components containing neighbors of $v_{i-1}$.

        \item Query the level-$i$ separators within these components: $Q_i = \bigcup_{C \in \mathcal{A}_{i-1}} \sigma_i(C)$.
        \item Let $\mathcal{Q} \leftarrow \mathcal{Q} \cup Q_i$ and $v_i = \argmin_{v \in \mathcal{Q}} f(v)$.
    \end{enumerate}

\item \textbf{Round $t$:}
    \begin{enumerate}
        \item Let $\mathcal{A}_{t-1} \subseteq \mathcal{C}_{t-1}$ be the set of final (level $t-1$) components adjacent to $v_{t-1}$.
        \item Query all vertices in these components: $Q_t = \bigcup_{C \in \mathcal{A}_{t-1}} C$.
    \end{enumerate}

\item \emph{Output:} Let $\mathcal{Q} = \mathcal{Q} \cup Q_t$ be the set of all queried vertices. Output $v^* = \argmin_{v \in \mathcal{Q}} f(v)$.
\end{itemize}

\paragraph{Correctness.}
It suffices to show that the neighbors of $v^*$ have been queried: $N(v^*) \subseteq \mathcal{Q}$. Let $u$ be an arbitrary neighbor of $v^*$. We must show that $u \in \mathcal{Q}$.
Let $m = Level(v^*)$ and $j = Level(u)$.

\emph{Case 1: $m=t$ ($v^*$ is in a final component).}
Let $C^* \in \mathcal{C}_{t-1}$ be the final component containing $v^*$.
Let $C_0=V$. The hierarchical decomposition ensures that for $C^*$, there is a unique sequence of ``ancestor components'' $(C_1,  \ldots, C_{t-1}=C^*)$ such that $C_i \in \mathcal{C}_i$ and $C_i \subseteq C_{i-1}$ for $i \ge 1$. Intuitively, $C_i$ is the specific level-$i$ component that contains $v^*$. Let the set of vertices in the \emph{associated separators} for this lineage be
$S^* := S_1 \cup \bigl(\bigcup_{i=2}^{t-1} \sigma_i(C_{i-1})\bigr)\,.$

We first show that $u \in C^* \cup S^*$.
By construction, $C_1$ is a component of $G[V \setminus S_1]$ and  $C_i$ is a  component of $G[C_{i-1} \setminus \sigma_i(C_{i-1})]$ for $i\ge 2$.

If $u \in C^*$, the condition holds. If $u \notin C^*$, there must be some level $i \ge 1$ where $u \in C_{i-1}$ but $u \notin C_i$. Since there is an edge $(v^*, u)$ and $v^* \in C_i$, the definition of the component $C_i$ implies that
 $u \in \sigma_i(C_{i-1})$ if $i \geq 2$ or  $u \in S_1$ if $i=1$.
Thus $u \in S^*$.

Next, to show $u \in \mathcal{Q}$, it suffices to show that $C^* \cup S^* \subseteq \mathcal{Q}$.

\begin{enumerate}
    \item \emph{Querying $C^*$:} Since Level$(v^*)=t$, vertex $v^*$ is not in any separator (which are queried in rounds 1 to $t-1$). As $v^* \in \mathcal{Q}$, it must have been queried in round $t$. This implies $C^* \in \mathcal{A}_{t-1}$, meaning the entire component $C^*$ was queried ($C^* \subseteq Q_t \subseteq \mathcal{Q}$).

    \item \emph{Querying $S^*$:} $S_1$ is queried in Round 1 ($S_1 \subseteq \mathcal{Q}$). For $i \ge 2$, consider the ancestor component $C_{i-1}$. We know $v^* \in C^* \subseteq C_{i-1}$. By Lemma~\ref{claim:non_exploration},  if $C_{i-1}$ were not in $\mathcal{A}_{i-1}$, then no vertex inside $C_{i-1}$ (including $v^*$) would be queried. Since $v^*$ is queried, it follows that $C_{i-1} \in \mathcal{A}_{i-1}$, and so the separator $\sigma_i(C_{i-1})$ was queried in Round $i$ (i.e. $\sigma_i(C_{i-1}) \subseteq \mathcal{Q}$).
\end{enumerate}
Since $u \in C^* \cup S^*$ and $C^* \cup S^* \subseteq \mathcal{Q}$, we have $u \in \mathcal{Q}$.

\emph{Case 2: $m < t$ ($v^*$ is in a separator).}
Let $\mathcal{Q}_{t-1}$ be the set of vertices queried up to the end of round $t-1$. We have $v^* \in \mathcal{Q}_{t-1} \subseteq \mathcal{Q}$. Since $v^* = \argmin_{v \in \mathcal{Q}} f(v)$, the tie-breaking rule of the algorithm implies $v^* = \argmin_{v \in \mathcal{Q}_{t-1}} f(v)$. By definition, this means $v^* = v_{t-1}$.

Recall $j = \text{Level}(u)$, where $u$ is the neighbor of $v^*$ we aim to show is in $\mathcal{Q}$. We consider a few cases:

\begin{description}
\item \textit{Case 2.1: $j=t$ ($u$ is in a final component $C$).}
We have $C \in \mathcal{C}_{t-1}$.
By the definition of round $t$, the algorithm identifies and queries the set $\mathcal{A}_{t-1}$ of components adjacent to $v_{t-1}$. Since $u \in C$ neighbors $v_{t-1}=v^*$, we have $C \in \mathcal{A}_{t-1}$. Thus  component $C$ is queried, so  $u \in \mathcal{Q}$.

\item \emph{Case 2.2: $j < t$ ($u$ is in a separator).}
Thus $u$ belongs to some level-$j$ separator, so there exists a component $C_{j-1} \in \mathcal{C}_{j-1}$ with $u \in S_j(C_{j-1})$ (where  $C_0=V$ and $S_1(C_0)=S_1$ if $j=1$).
\noindent {\emph{Case 2.2.1: $j=m$.}}
Since $u$ and $v$ are adjacent and at the same level $j$, they  belong to the same separator ($S_j(C_{j-1})$) by the hierarchical decomposition.
Since $v^*$ was queried (in round $m=j$), the entire separator $S_j(C_{j-1})$ must have been queried in round $j$, so   $u \in Q_j \subseteq \mathcal{Q}$.

\noindent {\emph{Case 2.2.2: $j \geq  m+1$} ($u$ is deeper in the hierarchy than $v^*$).}
As $v^*$ was queried in round $m$, we show the running minimum stabilizes at $v^*$ from round $m$ onwards. Formally, we show by induction that $v_i=v^*$ for $i=m, \ldots, t-1$.

Base Case ($i=m$): Since  $v^* = \argmin_{\mathcal{Q}} f$, vertex $v^*$ was queried in round $m$, and the algorithm uses lexicographic tie-breaking, we have  $v^* = \argmin_{\mathcal{Q}_m} f = v_m=v^*$.

Inductive Step: Assume $v_{i-1}=v^*$.  Since $i \ge m$, we have $v^* \in \mathcal{Q}_m \subseteq \mathcal{Q}_i  \subseteq \mathcal{Q}$. Since $v^* = \argmin_{\mathcal{Q}} f$ and $v^* \in \mathcal{Q}_i$, we get $v^* = \argmin_{\mathcal{Q}_i} f$. Thus $v_i=v^*$, completing the induction.

We know $u \in S_j(C_{j-1}) \subseteq C_{j-1}$. Since $u$ is a neighbor of $v^*=v_{j-1}$, we have $C_{j-1} \in \mathcal{A}_{j-1}$, so the separator $S_j(C_{j-1})$ is queried in round $j$. Thus $u \in Q_j \subseteq \mathcal{Q}$.

\noindent {\emph{Case 2.2.3: $j \leq m -1$} ($u$ is shallower in the hierarchy than $v^*$).} Since
$u \in S_j(C_{j-1})$, we have  $v^* \in C_{j-1}$.
If $j=1$, then $u \in S_1$ and $S_1$ is queried in round 1, so $u \in \mathcal{Q}$.
Else
 $j\ge 2$. Since $v^* \in C_{j-1}$ and $v^*$ is queried,  Lemma~\ref{claim:non_exploration} (with $i=j$ and $C = C_{j-1}$) implies $C_{j-1} \in \mathcal{A}_{j-1}$. Thus the separator $S_j(C_{j-1})$ containing $u$ was queried in round $j$, so $u \in Q_j \subseteq \mathcal{Q}$.
\end{description}

This completes the correctness argument.

\paragraph{Query Complexity Analysis.}
The total number of queries is
    $\sum_{i=1}^{t} |Q_i| \,.$
By Lemma~\ref{lem:shattering},
\begin{align} \label{eq:separator_bounds_at_each_level}
|S_1| < \frac{3sn}{K_1} \; \; \mbox{ and } \; \; |\sigma_i(C_{i-1})| <  \frac{3s |C_{i-1}|}{K_i} \le \frac{3s K_{i-1}}{K_i}
\; \; \forall i \in \{2, \ldots, t-1\}\,.
\end{align}
Using \eqref{eq:separator_bounds_at_each_level} we bound the number of queries in each round as follows:
\begin{enumerate}[(i)]
\item Round 1: Since $Q_1 = S_1$, we get $|Q_1| < {3sn}/{K_1}$.
\item Rounds $2 \leq i \leq t-1$: We have  $|\mathcal{A}_{i-1}| \le \Delta$ since $v_{i-1}$ has at most $\Delta$ neighbors. Thus  $|Q_i| < \Delta \cdot {3s \cdot K_{i-1}}/{K_i}$.
\item Round $t$:  $|Q_t| \le \Delta \cdot  K_{t-1}$.
\end{enumerate}

Define  $f : \mathbb{R}^{t-1} \to \mathbb{R}$ such that for each $\vec{K} = (K_1, \ldots, K_{t-1}) \in \mathbb{R}^{t-1}$,
\begin{align} \label{eq:ub_k_rounds}
f(\vec{K}) := \frac{3sn}{K_1} + \left( \sum_{i=2}^{t-1}  \frac{3s\Delta \cdot K_{i-1}}{K_i}  \right)+ \Delta \cdot  K_{t-1}.
\end{align}

The  bounds in (i-iii) imply that the total query complexity is at most $f(\vec{K})$. We first analyze the continuous relaxation of the problem where $\vec{K} \in \mathbb{R}_{+}^{t-1}$.
Applying the AM-GM inequality yields
\begin{align}
    f(\vec{K}) \geq t \sqrt[t]{\left( \frac{3sn}{K_1} \right) \cdot \left( \prod_{i=2}^{t-1} \frac{3s \Delta \cdot K_{i-1}}{K_i} \right) \cdot \left( \Delta \cdot K_{t-1} \right)} = k n^{\frac{1}{t}} \left( 3s \Delta\right)^{\frac{t-1}{t}} \,.
\end{align}

The minimum is obtained when the terms in the arithmetic mean are equal, corresponding to the solution $\vec{K}^* = (K_1^*, \ldots, K_{t-1}^*)$ defined by:
 $K_{i}^* = (3s)^{\frac{i}{t}} \cdot \left(\frac{n}{\Delta}\right)^{1-\frac{i}{t}} \; \; \mbox{for } i \in [t-1]\,.$
Then
\begin{align} \label{eq:f_K_star}
    f(\vec{K}^*) & = t \cdot (3s\Delta)^{1-\frac{1}{t}} \cdot n^{\frac{1}{t}} \,.
\end{align}
\emph{Case (a): $3s\Delta < n$.} Let $\widehat{K}_i = \lceil K_i^* \rceil$  $\forall i \in [t-1]$. Using the inequality $K_i^* \leq \widehat{K}_i < K_i^* + 1$, we get
\begin{align} \label{eq:ub_f_K_hat}
    f(\widehat{\vec{K}}) & =  \frac{3sn}{\widehat{K}_1} + \left( \sum_{i=2}^{t-1}  \frac{3s\Delta \cdot \widehat{K}_{i-1}}{\widehat{K}_i} \right) + \Delta \cdot  \widehat{K}_{t-1} \notag \\
    & < \frac{3sn}{{K}_1^*} + \left( \sum_{i=2}^{t-1}  \frac{3s\Delta \cdot ({K}^*_{i-1} +1) }{{K}^*_i}  \right) + \Delta \cdot ( {K}^*_{t-1} + 1)  = f(\vec{K}^*) + 3s\Delta \left( \sum_{i=2}^{t-1} \frac{1}{K_i^*} \right) + \Delta  \,.
\end{align}

Using the definition of $\vec{K}^*$  in \eqref{eq:ub_f_K_hat}, we obtain:
\begin{align} \label{eq:bound_additive_error_from_rounding}
f(\widehat{\vec{K}}) & < f(\vec{K}^*) + 3s \Delta \left(\sum_{i=2}^{t-1} \frac{1}{(3s)^{\frac{i}{t}}  \left(\frac{n}{\Delta}\right)^{1-\frac{i}{t}}} \right) + \Delta  
 = f(\vec{K}^*) + \Delta 
\left( \frac{ 1 - \left(\frac{n}{3s\Delta} \right)^{\frac{2}{t} - 1}}{ \left(\frac{n}{3s\Delta} \right)^{\frac{1}{t}}  - 1} \right) + \Delta 
\leq f(\vec{K}^*) + t \Delta,
\end{align}
where the last inequality in \eqref{eq:bound_additive_error_from_rounding} used  Lemma~\ref{lem:ub_t_minus_one} with $x=n/(3s\Delta)$.
Using \eqref{eq:f_K_star}  in  \eqref{eq:bound_additive_error_from_rounding}, we get  
\begin{align}\label{eq:ub_K_widetilde_intermediate}
    f(\widehat{\vec{K}}) & <  f(\vec{K}^*) + t \Delta = t \cdot (3s\Delta)^{1-\frac{1}{t}} \cdot n^{\frac{1}{t}}  + t \Delta  \,.
\end{align}
Since $s \geq 1$ and $n \geq \Delta$, we have
$    (3s\Delta)^{1-\frac{1}{t}} \cdot n^{\frac{1}{t}} \geq (3 \Delta)^{1-\frac{1}{t}} \cdot \Delta^{\frac{1}{t}} = 3^{1-\frac{1}{t}} \cdot \Delta $. Thus we can bound the right hand side of  \eqref{eq:ub_K_widetilde_intermediate} as follows
\begin{align}
    f(\widehat{\vec{K}}) & <  t \cdot (s\Delta)^{1-\frac{1}{t}} \cdot n^{\frac{1}{t}} \left[1 + 3^{1-\frac{1}{t}} \right] < 4t   (s\Delta)^{1-\frac{1}{t}}  n^{\frac{1}{t}} \,.
\end{align}
Thus running the algorithm with parameters $\widehat{\vec{K}}$ bounds the number of queries to $4t   (s\Delta)^{1-\frac{1}{t}}  n^{\frac{1}{t}}$.

\emph{Case (b): $3s\Delta \geq  n$.}
In this case, the bound derived from this optimization strategy is worse than the trivial complexity $n$.

Combining both cases, the  query complexity is bounded by $\min(n, 4t   (s\Delta)^{1-\frac{1}{t}}  n^{\frac{1}{t}})$ as required.
\end{proof}

\begin{lemma}[Non-Exploration] \label{claim:non_exploration}
    In the setting of Theorem~\ref{thm:k_rounds}, let $i \in \{2, \ldots, t\}$ and $C \in \mathcal{C}_{i-1}$. If the running minimum $v_{i-1}$ has no neighbors in $C$, then no vertex in $C$ is ever queried by the algorithm.
\end{lemma}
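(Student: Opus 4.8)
The claim is a structural invariant about the algorithm's exploration pattern: if the running minimum $v_{i-1}$ at the end of round $i-1$ has no neighbor inside a level-$(i-1)$ component $C$, then the algorithm never queries any vertex of $C$ — not in round $i$, and not in any later round. I would prove this by tracing how components get "activated" across rounds. The key observation is that the only way a vertex of $C$ (other than those in the separator $\sigma_i(C)$, which are themselves in $C$) can get queried is if $C$ enters the active set $\mathcal{A}_{i-1}$ in round $i$, which requires $v_{i-1}$ to have a neighbor in $C$. And once we're past round $i$, the sub-components of $C$ can only be reached through $C$ having been activated first. So the proof is essentially: (a) the sets queried in rounds $i, i+1, \ldots, t$ that lie inside any level-$(i-1)$ component $D$ are all contained in $D$ whenever $D \in \mathcal{A}_{i-1}$, and are empty otherwise; (b) apply this to $D = C$.

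\textbf{Key steps in order.} First I would set up notation: for a level-$(i-1)$ component $C$, the vertices of $C$ are partitioned into the separator $\sigma_i(C)$ and the level-$i$ sub-components $\mathcal{C}_i(C)$, each of which recursively decomposes. A vertex $v \in C$ is queried in round $i$ only if $v \in \sigma_i(C)$ and $C \in \mathcal{A}_{i-1}$ (by definition of $Q_i = \bigcup_{C' \in \mathcal{A}_{i-1}} \sigma_i(C')$, and since $\sigma$'s of distinct level-$(i-1)$ components are disjoint). A vertex $v \in C$ is queried in round $r > i$ only if it lies in some level-$r$ separator $\sigma_r(D)$ (or a final component for $r=t$) where $D$ is the level-$(r-1)$ ancestor component of $v$, and $D \in \mathcal{A}_{r-1}$; but $D \subseteq C$ since $D$ is nested inside the unique level-$(i-1)$ component containing $v$, which is $C$. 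So every query inside $C$ across rounds $i, \ldots, t$ arises from some $D \subseteq C$ being in $\mathcal{A}_{r-1}$ for some $r \ge i$. Second, I would argue by induction on the round $r \ge i$ that if no vertex of $C$ has been queried through round $r-1$, then no level-$(r-1)$ component $D \subseteq C$ is in $\mathcal{A}_{r-1}$, hence no vertex of $C$ is queried in round $r$. The base case $r = i$: $\mathcal{A}_{i-1}$ is the set of level-$(i-1)$ components adjacent to $v_{i-1}$; the only such component that is a subset of $C$ is $C$ itself, and by hypothesis $v_{i-1}$ has no neighbor in $C$, so $C \notin \mathcal{A}_{i-1}$, giving $Q_i \cap C = \emptyset$. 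For the inductive step, $\mathcal{A}_{r-1}$ consists of level-$(r-1)$ components adjacent to $v_{r-1}$; a level-$(r-1)$ component $D \subseteq C$ would need $v_{r-1}$ to have a neighbor in $D$. Here I would use that $v_{r-1} = \argmin_{\mathcal{Q}_{r-1}} f$, and since (by inductive hypothesis) $\mathcal{Q}_{r-1}$ contains no vertex of $C$, we have $v_{r-1} \notin C$; but more importantly, whether or not $v_{r-1} \in C$, I claim $v_{r-1}$ has no neighbor in $D$ for any such $D$ — actually the cleanest route is: a neighbor $u$ of $v_{r-1}$ with $u \in D \subseteq C$ means $u \in C$, so either $v_{r-1} \in C$ (impossible since $\mathcal{Q}_{r-1} \cap C = \emptyset$) or $v_{r-1} \notin C$, in which case the edge $(v_{r-1}, u)$ crosses out of $C$, forcing $v_{r-1} \in \sigma_{i}(\text{parent of } C)$ or higher — this doesn't immediately contradict. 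Let me reconsider.

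\textbf{The main obstacle.} The subtle point — and what I expect to be the crux — is ruling out that a *later* running minimum $v_{r-1}$ ($r > i$) could be adjacent to something inside $C$ even though $v_{i-1}$ was not. The resolution must be that the running minimum can only "move into" a region the algorithm has already explored, and since nothing inside $C$ gets explored, the running minimum's relevant neighbors stay outside. Concretely: I would show that for $r \ge i$, $v_{r-1}$ either equals $v_{i-1}$ or lies in $\mathcal{Q}_{r-1} \setminus \mathcal{Q}_{i-1}$, which (by the inductive claim that $\mathcal{Q}_{r-1} \cap C = \emptyset$) means $v_{r-1} \notin C$; and crucially, for $v_{r-1}$ to have a neighbor $u$ inside a level-$(r-1)$ sub-component $D \subseteq C$, note $D$ is a connected component of $G[C' \setminus \sigma(C')]$ for $C'$ its parent — for the edge $(v_{r-1},u)$ with $u \in D$ to exist with $v_{r-1} \notin D$, we'd need $v_{r-1} \in \sigma_{r-1}(C')$ where $C'$ is $D$'s parent, so $v_{r-1} \in C' \subseteq C$, contradicting $v_{r-1} \notin C$. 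That closes the loop: $v_{r-1} \notin C \Rightarrow$ no neighbor of $v_{r-1}$ lies in any sub-component of $C$ at level $r-1 \Rightarrow$ no sub-component of $C$ enters $\mathcal{A}_{r-1} \Rightarrow Q_r \cap C = \emptyset$. The only genuinely delicate bookkeeping is making the nesting/adjacency argument precise (an edge from outside $C$ into a level-$(r-1)$ sub-component of $C$ must pass through one of the separators peeled off along the way, all of which are contained in $C$), so I would state that as a small auxiliary fact about the hierarchical decomposition and then the induction runs cleanly.
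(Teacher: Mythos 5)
Your high-level plan (induction on rounds, with the crux being that later running minima cannot ``leak'' into $C$) matches the paper, and your observation that the problem is ruling out a \emph{later} running minimum $v_{r-1}$ being adjacent to $C$ is exactly the right thing to worry about. But the resolution you offer is flawed, and the auxiliary fact you lean on is false.

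Specifically, you argue that if $v_{r-1} \notin C$ but $v_{r-1}$ is adjacent to some $u$ in a level-$(r-1)$ sub-component $D \subseteq C$, then $v_{r-1}$ must lie in $\sigma_{r-1}(C')$ where $C'$ is $D$'s parent, whence $v_{r-1} \in C' \subseteq C$, contradiction. This is not correct: a neighbor of $D$ outside $D$ need not be in $\sigma_{r-1}(C')$; it can lie outside $C'$ entirely. Tracing the lineage $D = C_{r-1} \subseteq C_{r-2} \subseteq \cdots \subseteq C_{i-1} = C \subseteq \cdots \subseteq C_0 = V$, a vertex $w$ adjacent to $D$ but outside $D$ must lie in $S_1 \cup \bigcup_{j=2}^{r-1} \sigma_j(C_{j-1})$, and when $w \notin C$ it sits in a separator at level $< i$ --- i.e., in $S_1$ or $\sigma_j(C_{j-1})$ for $j < i$ --- none of which are contained in $C$. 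So your parenthetical claim that the relevant separators are ``all of which are contained in $C$'' is wrong, and the induction as you have set it up does not close.

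The reason is that your induction hypothesis is too weak: you only carry ``$\mathcal{Q}_{r-1} \cap C = \emptyset$'' forward, and then try to deduce from the hierarchical structure alone that $v_{r-1}$ is not adjacent to $C$. The paper instead strengthens the invariant to two conjuncts: $Q_r \cap C = \emptyset$ \emph{and} $v_r$ has no neighbor in $C$. The second conjunct is maintained directly in the inductive step: any component queried in round $r$ lies inside some level-$(i-1)$ component $A$ with $A \neq C$, and since distinct level-$(i-1)$ components are disconnected ($E(A,C)=\emptyset$), the vertices of $Q_r$ are not adjacent to $C$; combined with the inductive hypothesis that $v_{r-1}$ is not adjacent to $C$, the new minimum $v_r \in \{v_{r-1}\} \cup Q_r$ is not adjacent to $C$ either. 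Your argument can also be repaired without the false structural claim --- if $v_{r-1} \neq v_{i-1}$ then $v_{r-1}$ was queried in some round $\rho \ge i$, hence $v_{r-1}$ lies in a separator or component nested inside some level-$(i-1)$ component $A$; since $v_{r-1} \notin C$ we get $A \neq C$, so $E(A,C) = \emptyset$ and $v_{r-1}$ has no neighbor in $C$ --- but that repair is, in effect, rediscovering the paper's stronger invariant.
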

The proof of the lemma is in Appendix~\ref{app:algorithms}, together with  other proofs omitted from the main text.

\subsection{Randomized Algorithm} \label{sec:randomized}

We also consider a randomized algorithm, which is  a parallelized version of classical steepest descent with a warm start algorithm~\cite{aldous1983minimization}.

To rigorously handle potential ties in function values $f: V \to \mathbb{R}$ and ensure a well-defined search path, we define a strict total order $\prec$ of the vertices using their values and lexicographic tie-breaking.
    For each $u,v \in V$, we have $u \prec v$ if:
    \begin{itemize}
        \item $f(u) < f(v)$; or
        \item $f(u) = f(v)$ and the index of $u$ is smaller than the index of $v$.
    \end{itemize}
    The rank of $v \in V$, denoted $\Rank(v) \in [n]$, is its position in this total order.

    The \emph{steepest descent path} from $v$ is the sequence of vertices $v_0, v_1, \ldots, v_k$ such that $v_0 = v$ and
    $
        v_{i+1} = \argmin_{u \in N(v_i)} \{ u \mid u \prec v_i \}
    $ for each $i < k$.
    The path terminates at $v_k$ when $v_k$ is a local minimum with respect to $\prec$. The length of this path is denoted $\mathcal{L}(v) = k \leq \Rank(v)-1$. A local minimum with respect to  $\prec$ is also a local minimum with respect to $f$.

To bound the query complexity, we first quantify the quality of the warm start. \footnote{In Lemma~\ref{lem:expected_rank_replacement} we use sampling with replacement as it does not require shared state among different processors.}

\begin{lemma} \label{lem:expected_rank_replacement}
Let $Q$ be a multiset of $q$ vertices sampled uniformly at random with replacement from $V$. Let $v_{\min}$ be the minimum vertex in $Q$ with respect to $\prec$. Then $\E[\mathcal{L}(v_{\min})] < \frac{n}{q+1}$.
\end{lemma}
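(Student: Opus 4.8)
The plan is to bound $\E[\mathcal{L}(v_{\min})]$ by controlling, for each threshold rank $r$, the probability that the warm start $v_{\min}$ has rank exceeding $r$, and then relating $\mathcal{L}(v_{\min})$ to $\Rank(v_{\min})$ via the bound $\mathcal{L}(v) \le \Rank(v) - 1$ already recorded in the excerpt. Concretely, since $Q$ is a multiset of $q$ i.i.d.\ uniform samples from $V$, the event $\{\Rank(v_{\min}) > r\}$ is exactly the event that none of the $q$ samples lands among the $r$ vertices of rank $\le r$; because the order $\prec$ is a strict total order, there are exactly $r$ such vertices, so this probability is $(1 - r/n)^q$. Thus
\begin{align}
\E[\mathcal{L}(v_{\min})] \;\le\; \E[\Rank(v_{\min}) - 1] \;=\; \sum_{r=1}^{n-1} \Pr[\Rank(v_{\min}) > r] \;=\; \sum_{r=1}^{n-1} \left(1 - \frac{r}{n}\right)^{q}.
\end{align}

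The remaining task is to estimate this sum. The clean way is to compare it to an integral: the function $r \mapsto (1 - r/n)^q$ is decreasing on $[0, n]$, so $\sum_{r=1}^{n-1} (1 - r/n)^q \le \int_{0}^{n} (1 - x/n)^q \, dx$. Substituting $x = n(1-u)$ gives $\int_0^1 n u^q \, du = \frac{n}{q+1}$, which is exactly the claimed bound. One should double-check the inequality is strict: since $(1-r/n)^q$ is strictly decreasing (for $q \ge 1$; the case $q = 0$ is vacuous as then $v_{\min}$ is undefined, but in any case one can check it directly), the sum is strictly less than the integral, giving $\E[\mathcal{L}(v_{\min})] < \frac{n}{q+1}$ as stated.

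I do not anticipate a serious obstacle here — the argument is a standard "tail-sum plus integral comparison." The one point requiring a little care is the very first inequality: $\mathcal{L}(v) \le \Rank(v) - 1$ must hold pointwise for every vertex $v$ (it does, since the steepest descent path is strictly decreasing in $\prec$ and hence visits distinct vertices, at most $\Rank(v)$ of them, so its length is at most $\Rank(v) - 1$), and this is already asserted in the setup preceding the lemma, so it may simply be cited. A secondary subtlety is handling the replacement sampling correctly: because samples are drawn with replacement, the events "sample $j$ avoids the bottom-$r$ set" are independent across $j$, which is what makes $\Pr[\Rank(v_{\min}) > r] = (1 - r/n)^q$ exact rather than merely an upper bound — worth stating explicitly but not hard.

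An alternative to the integral comparison, if one prefers a purely discrete argument, is to write $\sum_{r=0}^{n-1}(1-r/n)^q = \sum_{j=1}^{n} (j/n)^q$ (reindexing $j = n - r$) and bound $\sum_{j=1}^n j^q \le \int_0^{n+1}\! x^q\,dx$ or use the known estimate $\sum_{j=1}^n j^q < \frac{(n+1)^{q+1}}{q+1}$; this is slightly messier than the continuous bound, so I would present the integral version as the main line and mention this only if a reviewer objects to comparing a sum with an integral.
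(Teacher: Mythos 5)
Your proposal is correct and follows essentially the same route as the paper's proof: bound $\mathcal{L}(v_{\min})$ by $\Rank(v_{\min}) - 1$, compute the tail probabilities $\Pr[\Rank(v_{\min}) > r] = (1-r/n)^q$ exactly using i.i.d.\ sampling, and compare the resulting sum to the integral $\int_0^n (1-x/n)^q\,dx = \frac{n}{q+1}$. The only cosmetic difference is that you subtract the $1$ from the expectation at the start while the paper carries the $+1$ term through to the end; otherwise the argument is identical.
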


For a vertex $v \in V$ and  $\rho \in \mathbb{R}_+$, let the ball of radius $\rho$ centered at $v$ be
\begin{align}
B(v; \rho) = \{ u \in V \mid dist(v, u) \le \rho \} \,.
\end{align}

The next lemma bounds the size of the ball centered at $v$ with radius $\rho$.
\begin{lemma} \label{lem:ball_size_rigorous}
    Let $G = (V,E)$ be a graph with maximum degree $\Delta \ge 2$. Let $v \in V$ and  $\rho \in \mathbb{N}^*$.  
      If $\Delta=2$, then $|B(v; \rho)| \le 2\rho+1$.
        Else if $\Delta \ge 3$, then $|B(v; \rho)| < \frac{\Delta}{\Delta-2}(\Delta-1)^\rho$.
\end{lemma}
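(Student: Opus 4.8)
The plan is to bound $|B(v;\rho)|$ by counting, for each distance $d \in \{0,1,\ldots,\rho\}$, the number of vertices at distance exactly $d$ from $v$, and then summing the resulting geometric-type series. Let $N_d$ denote the number of vertices $u$ with $dist(v,u) = d$. Clearly $N_0 = 1$ and $N_1 \le \Delta$. For the inductive bound, observe that every vertex at distance $d \ge 2$ from $v$ is adjacent to some vertex at distance $d-1$; moreover, each vertex $w$ at distance $d-1$ has at most $\Delta$ neighbors, at least one of which (a predecessor on a shortest path to $v$) lies at distance $d-2$, so $w$ contributes at most $\Delta-1$ new vertices at distance $d$. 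This gives the recursion $N_d \le (\Delta-1) N_{d-1}$ for $d \ge 2$, hence $N_d \le \Delta(\Delta-1)^{d-1}$ for all $d \ge 1$.

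For the case $\Delta = 2$, the bound $N_d \le 2$ for $d \ge 1$ is immediate (either directly, or from $N_1 \le 2$ and $N_d \le (\Delta-1)N_{d-1} = N_{d-1}$), so $|B(v;\rho)| \le 1 + 2\rho$. For $\Delta \ge 3$, we sum:
\begin{align*}
|B(v;\rho)| \;\le\; 1 + \sum_{d=1}^{\rho} \Delta(\Delta-1)^{d-1} \;=\; 1 + \Delta \cdot \frac{(\Delta-1)^\rho - 1}{\Delta - 2} \;<\; 1 + \frac{\Delta}{\Delta-2}(\Delta-1)^\rho \,.
\end{align*}
It remains to absorb the additive $1$ into the main term: since $\Delta \ge 3$ we have $\frac{\Delta}{\Delta-2} \ge 3 > 1$ and $(\Delta-1)^\rho \ge 2$, so in fact $\frac{\Delta}{\Delta-2}(\Delta-1)^\rho \ge 6$, and one checks that $1 + \frac{\Delta}{\Delta-2}((\Delta-1)^\rho - 1) < \frac{\Delta}{\Delta-2}(\Delta-1)^\rho$ already holds because $\frac{\Delta}{\Delta-2} > 1$ implies $\frac{\Delta}{\Delta-2}\cdot 1 > 1$. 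Thus the strict inequality $|B(v;\rho)| < \frac{\Delta}{\Delta-2}(\Delta-1)^\rho$ follows.

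The only mildly delicate point — and the one I would state carefully — is the justification that each distance-$(d-1)$ vertex has at most $\Delta-1$ neighbors at distance $d$ rather than $\Delta$; this requires noting that a vertex at distance $d-1 \ge 1$ has at least one neighbor strictly closer to $v$ (any predecessor on a BFS shortest path), which consumes one of its $\le \Delta$ incident edges. This is what makes the recursion $(\Delta-1)$-ary rather than $\Delta$-ary and is essential for matching the claimed constant. Everything else is a routine geometric sum, so I do not anticipate any real obstacle.
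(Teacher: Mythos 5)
Your proof is correct and follows essentially the same route as the paper: set $N_0=1$, $N_1\le\Delta$, observe that each vertex at distance $d-1\ge 1$ has a neighbor one step closer to $v$ and hence at most $\Delta-1$ edges toward distance $d$, derive $N_d\le(\Delta-1)N_{d-1}$, and sum the geometric series. Your explicit verification of the strict inequality (via $1<\frac{\Delta}{\Delta-2}$) is slightly more careful than the paper's terse final line, but the argument is the same.
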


\medskip 

\refstepcounter{algorithm} \label{alg:wssd}
\noindent\textbf{Algorithm \thealgorithm: Parallel Steepest Descent with a Warm Start.}

The algorithm operates in $t$ rounds and is parameterized by a sample size $q_1$ and a search radius $r$.  It uses one round for sampling and $t-1$ rounds for parallel search.
 By querying a ball of radius $r+1$, the algorithm can simulate $r$ steps of the steepest descent path in a single round.

 \paragraph{Round 1 (Warm Start).} Sample a multiset $Q_1$ of $q_1$ vertices chosen uniformly at random with replacement from $V$  and query them. Let $v^{(0)}$ be the minimal vertex in $Q_1$ with respect to $\prec$.

\paragraph{Rounds 2 to $t$ (Parallel Search).}
For each step $i = 1, \dots, t-1$:
\begin{enumerate}
    \item Query all vertices in the ball $B(v^{(i-1)}; r+1)$. Trace the steepest descent path starting from $v^{(i-1)}$ using the queried values.
    \item If the path terminates at a local minimum $v^*$ with $dist(v^*, v^{(i-1)})\leq r$,  output $v^*$ and halt.
     Else, let $v^{(i)}$ be the vertex with $dist(v^{(i-1)}, v^{(i)})= r$ along the steepest descent path.
\end{enumerate}
If no local minimum was found by the end of $t$ rounds, output ``Failure''.

The next figure illustrates a step of the parallel search strategy of the algorithm, which queries a ball around $v^{(0)}$ to identify the initial segment of the steepest descent path.

\definecolor{deepblue}{rgb}{0.1, 0.2, 0.4}
\definecolor{oceanblue}{rgb}{0.0, 0.3, 0.6}
\definecolor{manifoldbase}{rgb}{0.85, 0.88, 0.95}

\definecolor{crimson}{rgb}{0.8, 0.0, 0.0} 
\definecolor{faintred}{rgb}{1.0, 0.8, 0.8}
\usetikzlibrary{calc} 

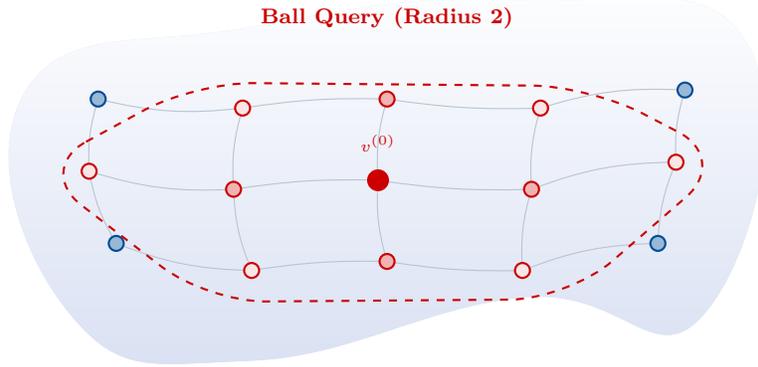
\begin{figure}[h!]
    \centering 
\begin{tikzpicture}[
    scale=1.2,
    vertex/.style={circle, draw=deepblue!40, fill=white, inner sep=1.5pt, thin},
    sample/.style={circle, draw=oceanblue, fill=oceanblue!40, inner sep=2pt, thick},
    searchnode/.style={circle, draw=crimson, fill=crimson!20, inner sep=2pt, thick},
    centernode/.style={circle, draw=crimson, fill=crimson, inner sep=2.5pt, very thick, text=white},
    edge/.style={thin, deepblue!30},
    pathedge/.style={thick, crimson, ->, >=stealth}
] 

    \begin{scope}[on background layer]
        \draw[draw=none, fill=manifoldbase, shade, top color=white, bottom color=manifoldbase] 
            plot [smooth cycle, tension=0.8] coordinates {
                (0,0) (2,-0.5) (5,0.2) (7,0) 
                (7.5,3) (5,3.5) (2,3.2) (-0.5,2.5)
            };
    \end{scope}

    \coordinate (n11) at (0.5, 0.8);  \coordinate (n12) at (2.0, 0.5);  \coordinate (n13) at (3.5, 0.6); \coordinate (n14) at (5.0, 0.5); \coordinate (n15) at (6.5, 0.8);
    \coordinate (n21) at (0.2, 1.6);  \coordinate (n22) at (1.8, 1.4);  \coordinate (n23) at (3.4, 1.5); \coordinate (n24) at (5.1, 1.4); \coordinate (n25) at (6.7, 1.7);
    \coordinate (n31) at (0.3, 2.4);  \coordinate (n32) at (1.9, 2.3);  \coordinate (n33) at (3.5, 2.4); \coordinate (n34) at (5.2, 2.3); \coordinate (n35) at (6.8, 2.5);

    \foreach \r in {1,2,3} {
        \draw[edge] (n\r1) to[bend right=10] (n\r2) to[bend left=5] (n\r3) to[bend right=5] (n\r4) to[bend left=10] (n\r5);
    }
    \foreach \c in {1,2,3,4,5} {
        \draw[edge] (n1\c) to[bend left=10] (n2\c) to[bend left=10] (n3\c);
    }

    \foreach \r in {1,2,3} \foreach \c in {1,2,3,4,5} {
        \node[vertex] at (n\r\c) {};
    }

     
    \foreach \x in {n21, n25, n12, n14, n32, n34} {
        \node[searchnode, fill=crimson!10] at (\x) {}; 
    }
    \foreach \x in {n22, n24, n13, n33} {
        \node[searchnode, fill=crimson!30] at (\x) {};
    }
    \node[centernode] (start) at (n23) {};

    \node[sample] at (n11) {};
    \node[sample] at (n31) {};
    \node[sample] at (n15) {};
    \node[sample] at (n35) {};

     
    \begin{scope}[crimson, dashed, thick, rounded corners=22pt]
        \draw ($(n21)!-0.6cm!(n23)$) -- 
              ($(n32)!-0.6cm!(n23)$) -- 
              ($(n34)!-0.6cm!(n23)$) -- 
              ($(n25)!-0.6cm!(n23)$) -- 
              ($(n14)!-0.6cm!(n23)$) -- 
              ($(n12)!-0.6cm!(n23)$) -- cycle;
    \end{scope}

    \node[font=\scriptsize\bfseries, crimson] at (3.5, 3.3) {Ball Query (Radius 2)};

    \node[above=3pt of start, font=\tiny, crimson] {\textbf{$v^{(0)}$}};

\end{tikzpicture}
\caption{Parallel search from $v^{(0)}$.}
\end{figure}

Next we quantify the performance of this algorithm.

\noindent \textbf{Proposition~\ref{thm:t_rounds_const_deg_generalized}.}
\emph{Let $G=(V,E)$ be a graph with $n$ vertices and maximum degree $\Delta$. The randomized query complexity of finding a local minimum in $t \ge 2$ rounds is $O(\sqrt{n} + t)$ when $\Delta \leq 2$ and $O\bigl(\frac{n}{t \cdot \log_\Delta n} + t \Delta^2 \sqrt{n}\bigr)$ when $\Delta \geq 3$.}

If $\Delta \geq 3$ and $t$ are constants, then the randomized query complexity is $ O(n/\log n)$ even in two rounds. Algorithm~\ref{alg:wssd} is effective for bounded-degree graphs with high expansion (e.g., expanders where $s=\Theta(n)$), where the deterministic bound of Theorem~\ref{thm:k_rounds} is $O(n)$. On graphs with slow expansion (e.g.  grids), specialized algorithms perform better (see, e.g.,~\cite{branzei2022query}).

\section{Lower Bounds} \label{sec:lower_bounds}

 Let $T=(V, E_T)$ be an arbitrary fixed spanning tree of the connected graph $G=(V,E)$, rooted at some vertex $r$.
For $u,v \in V$, let $dist_T(u,v)$ be the distance between $u$ and $v$ in $T$.

Let $Anc_T(v)$ denote the set of ancestors of $v$ in $T$ (the vertices on the unique path from $r$ to $v$ in $T$, including $r$ and $v$). We write $u \preceq_T v$ if $u \in Anc_T(v)$.

For each node $x \in V$, let $\mathcal{T}(x)$ denote the subtree of
$T$ rooted at $x$. Formally, the vertex set of $\mathcal{T}(x)$ is the set of descendants of $x$, i.e., $\{v \in V \mid x \preceq_T v\}$.

\begin{definition}[The Family $\mathcal{F}$ and Distribution $\mathcal{D}$.] \label{def:distribution_D}
Given the spanning tree $T$ of $G$ rooted at $r$, define for
 every vertex $v \in V$ the function $f_{v}: V \to \mathbb{Z}$:
\begin{align}
     f_{v}(x) = \begin{cases} -dist_T(r,x) & \text{if } x \preceq_T v \\ \phantom{-}dist_T(r,x) & \text{otherwise.} \end{cases}
\end{align}
 Let $\mathcal{D}$ be the uniform distribution over $\mathcal{F} = \{f_v \mid v \in V\}$.    
\end{definition}

\begin{remark} \label{rmk:unique_local_min_in_F}
     In Definition~\ref{def:distribution_D}, vertex  $v$ is the unique local  minimum of $f_v$ in the  tree $T$. Moreover, since adding edges to a graph cannot turn a node that is not a local minimum into one that is, the function $f_v$  has a unique local minimum in $G$.
For input distribution $\mathcal{D}$, the target local minimum is a random variable chosen uniformly at random from $V$.
\end{remark}

\paragraph{History and Candidate Set.} Let $\mathcal{A}$ be a deterministic  algorithm that runs in $t$ rounds. A history of $\mathcal{A}$ at the end of round $i$ on input $f \in \mathcal{F}$ represents the sequence of queries issued and answers observed by $\mathcal{A}$ on this input until the end of round $i$.

Let $\mathcal{H}_i$ denote the set of histories reachable by $\mathcal{A}$ at the end of round $i$ on inputs from $\mathcal{F}$.
Given a history $H \in \mathcal{H}_i$, the \emph{candidate set}  $\mathcal{C}(H)$ consists of the vertices that could still be local minima given this history.  We denote it by
\begin{align}
\mathcal{C}(H) = \{v \in V \mid  \text{input } f_v \text{ generates history } H \} \,.
\end{align}

\paragraph{Signatures.} We define the concept of a signature, which captures the information revealed about the ancestry of a vertex based on a set of queries in this construction.

\begin{definition}[Signature]
    Let $U,Q \subseteq V$ be arbitrary sets of vertices.
    For a vertex $u \in U$, the \emph{signature of $u$ with respect to $Q$} is defined as:
    \[ S_Q(u) := Q \cap Anc_T(u) \,. \]
    Let $\mathcal{S}_Q(U) = \bigcup_{v \in U} \{S(v)\}$ be the family of distinct signatures of vertices in $U$ with respect to $Q$.
\end{definition}

Figure~\ref{fig:example_graph_with_spanning_tree} shows an example graph $G$ (left) with a spanning tree (right) and the vertices queried by an algorithm in round 1.

\begin{figure}[h!]
    \centering
    \begin{minipage}[b]{0.48\textwidth}
        \centering
        \resizebox{\linewidth}{!}{%
            \begin{forest}
                for tree={
                    circle,
                    draw,
                    fill=black,
                    inner sep=1.5pt,
                    s sep=12mm,
                    l sep=15mm,
                    font=\small,
                    edge={thick, black},
                    delay={content=\phantom{5}}
                },
                tikz={
                    \draw (n4) to[bend left] (n8);
                    \draw (n8) to[bend left] (n10);
                    \draw (n5) to[bend left] (n11);
                    \draw (n6) to[bend left] (n12);
                    \draw (n11) to[bend left] (n13);
                    \draw (n13) to[bend left] (n16);
                    \draw (n2) to[bend left] (n3);
                    \draw (n4) to[bend left] (n7);
                    \draw (n6) to[bend left] (n11);
                    \draw (n9) to[bend left] (n10);
                    \draw (n13) to[bend left] (n14);
                    \draw (n15) to[bend left] (n16);
                }
                [1, name=n1, label=above:1
                    [2, name=n2, label={[label distance=2mm]left:2}
                        [4, name=n4, label={[label distance=0.5mm]left:4}
                            [5, name=n5, label=below:5]
                            [6, name=n6, label=below:6]
                        ]
                        [7, name=n7, label={[label distance=0.5mm]left:7}]
                    ]
                    [3, name=n3, label={[label distance=2mm]left:3}
                        [8, name=n8, label={[label distance=0.5mm]left:8}
                            [11, name=n11, label=below:11]
                            [12, name=n12, label=below:12]
                            [13, name=n13, label=below:13]
                            [14, name=n14, label=below:14]
                        ]
                        [9, name=n9, label={[label distance=0.5mm]left:9}]
                        [10, name=n10, label={[label distance=1mm]left:10}
                            [15, name=n15, label=below:15]
                            [16, name=n16, label=below:16]
                        ]
                    ]
                ]
            \end{forest}%
        }
        \label{fig:example_graph}
    \end{minipage}
    \hfill 
    \hfill
    \begin{minipage}[b]{0.48\textwidth}
        \centering
        \resizebox{\linewidth}{!}{%
            \begin{forest}
                for tree={
                    circle,
                    draw,
                    fill=black,
                    inner sep=1.5pt,
                    s sep=12mm,
                    l sep=15mm,
                    font=\small,
                    edge={thick, black},
                    delay={content=\phantom{5}}
                },
                tikz={
                    \node[querycircle, fit=(n2)] {};
                    \node[querycircle, fit=(n3)] {};
                    \node[querycircle, fit=(n10)] {};
                }
                [1, name=n1, label=above:1
                    [2, name=n2, label={[label distance=2mm]left:2}
                        [4, name=n4, label={[label distance=0.5mm]left:4}
                            [5, name=n5, label=below:5]
                            [6, name=n6, label=below:6]
                        ]
                        [7, name=n7, label={[label distance=0.5mm]left:7}]
                    ]
                    [3, name=n3, label={[label distance=2mm]left:3}
                        [8, name=n8, label={[label distance=0.5mm]left:8}
                            [11, name=n11, label=below:11]
                            [12, name=n12, label=below:12]
                            [13, name=n13, label=below:13]
                            [14, name=n14, label=below:14]
                        ]
                        [9, name=n9, label={[label distance=0.5mm]left:9}]
                        [10, name=n10, label={[label distance=1mm]left:10}
                            [15, name=n15, label=below:15]
                            [16, name=n16, label=below:16]
                        ]
                    ]
                ]
            \end{forest}%
        }
        \label{fig:spanning_tree_and_round_one_queries}
    \end{minipage}
    \caption{The left figure shows an example of a graph $G$. A spanning  tree of $G$ rooted at vertex 1 is shown on the right, with the set of vertices queried by an algorithm in round 1 circled in red.}
\label{fig:example_graph_with_spanning_tree}
\end{figure}
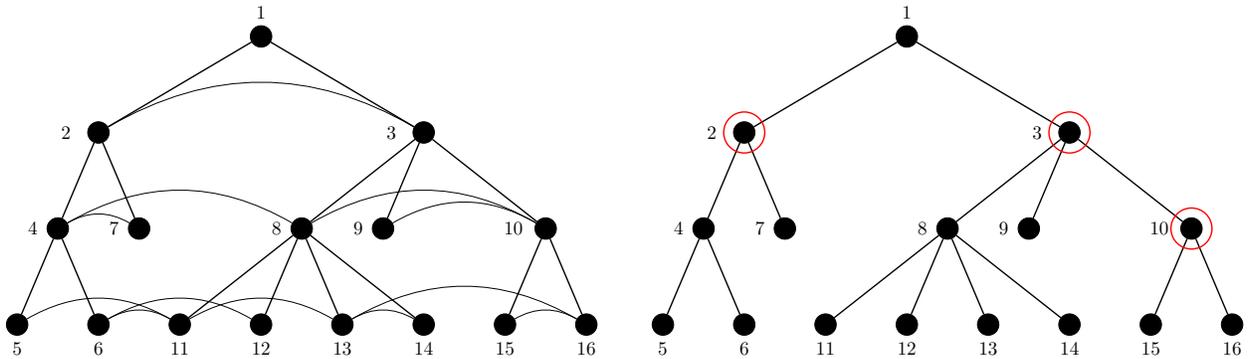

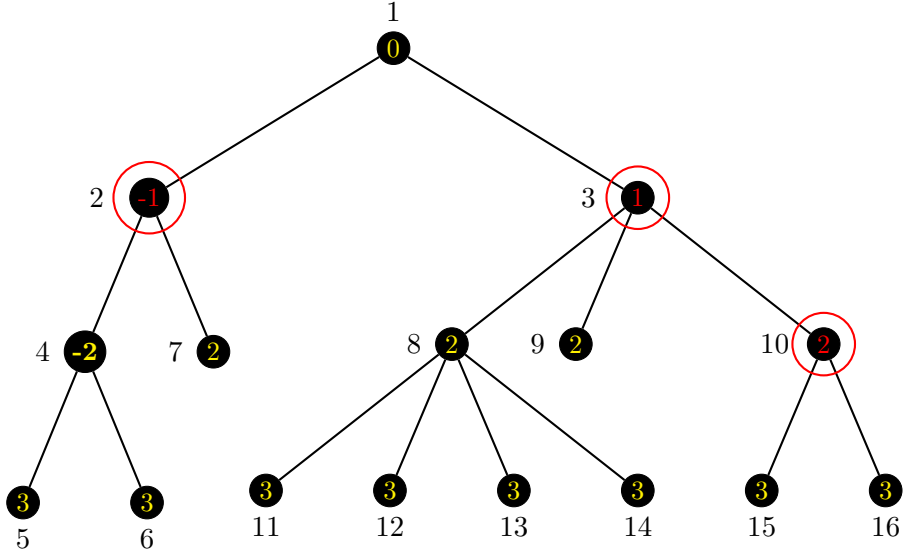
\begin{figure}[h!]
    \centering
    \begin{forest}
        for tree={
            circle,
            draw,
            fill=black,
            text=yellow, 
            inner sep=1.5pt,
            s sep=12mm,
            l sep=15mm,
            font=\small,
            edge={thick, black},
        },
        tikz={
            \node[querycircle, fit=(n2)] {};
            \node[querycircle, fit=(n3)] {};
            \node[querycircle, fit=(n10)] {};
        }
        [0, name=n1, label=above:1
            [-1, name=n2, text=red, label={[label distance=2mm]left:2}
                [-2, name=n4, font=\bfseries\small, label={[label distance=0.5mm]left:4}
                    [3, name=n5, label=below:5]
                    [3, name=n6, label=below:6]
                ]
                [2, name=n7, label={[label distance=0.5mm]left:7}]
            ]
            [1, name=n3, text=red, label={[label distance=2mm]left:3}
                [2, name=n8, label={[label distance=0.5mm]left:8}
                    [3, name=n11, label=below:11]
                    [3, name=n12, label=below:12]
                    [3, name=n13, label=below:13]
                    [3, name=n14, label=below:14]
                ]
                [2, name=n9, label={[label distance=0.5mm]left:9}]
                [2, name=n10, text=red, label={[label distance=1mm]left:10}
                    [3, name=n15, label=below:15]
                    [3, name=n16, label=below:16]
                ]
            ]
        ]
    \end{forest}
    \caption{Spanning tree with the round 1 queries circled in red. The value at each node is also shown, for the case  where the target local minimum is vertex $4$ (i.e. the input function is the function $f_4 \in \mathcal{F}$).}
    \label{fig:spanning_tree_and_round_one_queries_and_function_values}
\end{figure}

Suppose in round 1 the algorithm queries the set of vertices $Q_1=\{2, 3, 10\}$.
Table~\ref{tab:candidate_sets} lists the resulting candidate sets,  for each individual target local minimum.

\begin{table}[h!]
    \centering
    \caption{For each function $f_u \in \mathcal{F}$ (with target node $u$), the candidate set identified by  the algorithm  at the end of round 1 after querying nodes $\{2,3,10\}$.}
    \label{tab:candidate_sets}
    \renewcommand{\arraystretch}{1.1}
    \begin{tabular}{cc @{\quad} | @{\quad} cc}
        \toprule
        \textbf{Target node} ($u$) & \textbf{Candidate set} & \textbf{Target node} ($u$) & \textbf{Candidate set} \\
        \midrule
        1 & $\{1\}$                        & 9  & $\{3, 8, 9, 11, 12, 13, 14\}$ \\
        2 & $\{2, 4, 5, 6, 7\}$           & 10 & $\{10, 15, 16\}$ \\
        3 & $\{3, 8, 9, 11, 12, 13, 14\}$ & 11 & $\{3, 8, 9, 11, 12, 13, 14\}$ \\
        4 & $\{2, 4, 5, 6, 7\}$           & 12 & $\{3, 8, 9, 11, 12, 13, 14\}$ \\
        5 & $\{2, 4, 5, 6, 7\}$           & 13 & $\{3, 8, 9, 11, 12, 13, 14\}$ \\
        6 & $\{2, 4, 5, 6, 7\}$           & 14 & $\{3, 8, 9, 11, 12, 13, 14\}$ \\
        7 & $\{2, 4, 5, 6, 7\}$           & 15 & $\{10, 15, 16\}$ \\
        8 & $\{3, 8, 9, 11, 12, 13, 14\}$ & 16 & $\{10, 15, 16\}$ \\
        \bottomrule
    \end{tabular}
\end{table}

\subsection{Properties of the Construction}

The following lemma bounds the number of distinct outcomes obtainable from a batch of queries.

\begin{lemma}[Signature Lemma] \label{lem:signatures}
   Let $T$ be a  spanning tree of $G$ rooted at $r$. For all $U,Q \subseteq V$, the number of distinct signatures is bounded by:
    $ | \mathcal{S}_Q(U) | \leq |Q| + 1 \,. $ 
\end{lemma}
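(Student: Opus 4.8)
The plan is to exploit the fact that, in the tree $T$ rooted at $r$, the ancestor set $Anc_T(u)$ of any vertex $u$ is a \emph{chain}: it is precisely the vertex set of the unique root-to-$u$ path, so it is totally ordered by $\preceq_T$. Consequently every signature $S_Q(u) = Q \cap Anc_T(u)$ is itself a chain, and whenever it is nonempty it has a unique deepest (i.e. $\preceq_T$-maximal) element, which necessarily lies in $Q$. The heart of the argument is the claim that a nonempty signature is determined by this deepest element: if $q \in Q$ is the $\preceq_T$-deepest queried ancestor of $u$, then $S_Q(u) = Q \cap Anc_T(q)$.

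To prove this claim I would verify both inclusions. Since $q \preceq_T u$, we have $Anc_T(q) \subseteq Anc_T(u)$, which immediately gives $Q \cap Anc_T(q) \subseteq Q \cap Anc_T(u) = S_Q(u)$. For the reverse inclusion, take any $w \in S_Q(u)$; then $w$ lies on the root-to-$u$ path and hence is $\preceq_T$-comparable with $q$, and since $q$ is the maximum of $S_Q(u)$ we get $w \preceq_T q$, i.e. $w \in Anc_T(q)$. Thus $S_Q(u) \subseteq Q \cap Anc_T(q)$, establishing equality.

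With the claim in hand, I define a map $\phi$ from the nonempty signatures in $\mathcal{S}_Q(U)$ into $Q$, sending each nonempty signature to its deepest element. This map is injective: if $\phi(s_1) = \phi(s_2) = q$, then by the claim $s_1 = Q \cap Anc_T(q) = s_2$. Hence the number of nonempty signatures is at most $|Q|$; adding one for the possibly-occurring empty signature $\emptyset$ (which is $S_Q(u)$ precisely when no queried vertex is an ancestor of $u$) yields $|\mathcal{S}_Q(U)| \le |Q| + 1$, as desired.

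I do not expect a real obstacle here; the only point requiring care is the observation that $Anc_T(u)$ is totally ordered by $\preceq_T$, which makes ``the deepest element'' well defined and drives the reverse inclusion above — this is immediate from $T$ being a tree rooted at $r$. An essentially equivalent alternative is induction on $|Q|$: deleting a single vertex $q$ from $Q$ alters the signature of $u$ only when $q \in Anc_T(u)$, and one checks this can only merge signatures, never split them; however, the direct injection argument is cleaner and I would present that.
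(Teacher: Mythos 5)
Your proposal is correct and matches the paper's proof essentially step for step: both map each nonempty signature to its $\preceq_T$-deepest element in $Q$, and both establish injectivity by showing that a nonempty signature with deepest element $q$ equals $Q \cap Anc_T(q)$. The only cosmetic difference is that you frame the key claim in terms of a vertex $u \in U$ with $q$ its deepest queried ancestor, while the paper frames it directly in terms of the signature set $A$ and $x = m(A)$, but the content is identical.
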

\begin{proof}
    Let $\mathcal{S}^* = \mathcal{S}_Q(U) \setminus \{\emptyset\}$ be the set of non-empty signatures. For any $S \in \mathcal{S}^*$, the elements of $S$ lie on a path from the root $r$ and are thus totally ordered by the ancestor relation $\preceq_T$.
   
    We define a map $m: \mathcal{S}^* \to Q$. For each $S \in \mathcal{S}^*$, let $m(S)$ be the unique vertex in $S$ that is farthest from the root $r$ (i.e., $m(S)$ is the deepest node in $S$).

    We show that $m$ is injective. Let $A \in \mathcal{S}^*$ and let $x = m(A)$.
   
    1. ($A \subseteq Anc_T(x) \cap Q$): By definition of $m(A)$, every vertex $y \in A$ must be an ancestor of $x$. Thus $A \subseteq Anc_T(x)$. Since $A \subseteq Q$, we have $A \subseteq Anc_T(x) \cap Q$.

    2. ($Anc_T(x) \cap Q \subseteq A$): Since $A \in \mathcal{S}_Q(U)$, there exists $v \in U$ such that $A = Anc_T(v) \cap Q$. Since $x \in A$, we have $x \in Anc_T(v)$. 
    This implies $Anc_T(x) \subseteq Anc_T(v)$. Intersecting both sides with $Q$ yields $Anc_T(x) \cap Q \subseteq Anc_T(v) \cap Q = A$.
   
    Combining  inclusions (1) and (2), we conclude that $A = Anc_T(x) \cap Q$.

    If $m(A) = m(B) = x$ for some $A, B \in \mathcal{S}^*$, then $A = Anc_T(x) \cap Q = B$. Thus $m$ is injective.
   
    Therefore, $|\mathcal{S}^*| \leq |Q|$. Including the potential empty signature $\emptyset$, we have $|\mathcal{S}_Q(U)| \leq |Q| + 1$.
\end{proof}

 Given a deterministic algorithm $\mathcal{A}$ and $i\in \mathbb{N}$, recall $\mathcal{H}_i$ denotes the set of histories reachable by $\mathcal{A}$ at the end of round $i$ on inputs from $\mathcal{F}$.

For each $H \in \mathcal{H}_i$, let $\mathcal{Q}(H) \subseteq V$ denote the vertices queried in $H$,  $\mathcal{Q}^-(H)=\{x \in Q \mid f(x)<0\}$, $\mathcal{Q}^+(H)=\{x \in Q \mid f(x)>0\}$, and $r_H$  the unique deepest node in $\mathcal{Q}^- \cup \{r\}$.

\begin{lemma} \label{lem:candidate_set_characterization}
Let $\mathcal{A}$ be a deterministic algorithm and $\mathcal{H}_i$  the set of  histories reachable by $\mathcal{A}$ at the end of round $i \in \mathbb{N}$ on inputs from $\mathcal{F}$.
Then:
\begin{enumerate}[(a)]
    \item For each history $H \in \mathcal{H}_i$, the candidate set $\mathcal{C}(H)$ at the end of round $i$ is the set of vertices of the tree
    \begin{align}
    T_i := \mathcal{T}(r_H) \setminus \left( \cup_{x \in \mathcal{Q}^+(H)} \mathcal{T}(x) \right) \,.
    \end{align}
    The only vertex possibly queried at the end of round $i$ in $\mathcal{C}(H)$ is $r_H$ ($\mathcal{C}(H) \cap \mathcal{Q}(H) \subseteq \{r_H\}$).
    \item The collection of non-empty candidate sets $\mathcal{R}_i = \bigcup_{H \in \mathcal{H}{_i}} \{ \mathcal{C}(H) \}$
    forms a partition of $V$ and the map $H \mapsto \mathcal{C}(H)$ is a bijection from the reachable histories $\mathcal{H}_i$ to $\mathcal{R}_i$.
\end{enumerate}
\end{lemma}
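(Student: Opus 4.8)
The plan is to first determine exactly which inputs $f_v \in \mathcal{F}$ produce a given reachable history, and then read off the stated subtree description by elementary set algebra on $T$. The crucial preliminary fact is a consistency characterization: fix $H \in \mathcal{H}_i$ and let $a_x$ be the answer recorded in $H$ for each $x \in \mathcal{Q}(H)$. Since $\mathcal{A}$ is deterministic, the queries issued in round $j+1$ are a function of the history through round $j$; hence, by induction on $j = 1,\dots,i$ (started from a reference input $f_w$ that generates $H$, which exists because $H$ is reachable), an input $f_v$ generates $H$ if and only if $f_v$ agrees with all recorded answers, i.e. $f_v(x) = a_x$ for every $x \in \mathcal{Q}(H)$. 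Next, from Definition~\ref{def:distribution_D} and the fact that $r \preceq_T v$ for all $v$, one checks $f_v(x) < 0 \iff x \in Anc_T(v)\setminus\{r\}$, $f_v(x)=0 \iff x = r$, and $f_v(x) > 0 \iff x \notin Anc_T(v)$; moreover the magnitude $|f_v(x)| = dist_T(r,x)$ depends only on $x$ in each case. Therefore $f_v(x) = a_x$ is equivalent to $v \in \mathcal{T}(x)$ when $a_x < 0$, to $v \notin \mathcal{T}(x)$ when $a_x > 0$, and is vacuous when $a_x = 0$ (which forces $x = r$).

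Intersecting these constraints gives
\[ \mathcal{C}(H) = \Bigl(\bigcap_{x \in \mathcal{Q}^-(H)} \mathcal{T}(x)\Bigr) \setminus \Bigl(\bigcup_{x \in \mathcal{Q}^+(H)} \mathcal{T}(x)\Bigr). \]
To collapse the intersection I would use that $\mathcal{Q}^-(H) \subseteq Anc_T(w)$, so $\mathcal{Q}^-(H) \cup \{r\}$ is a chain under $\preceq_T$ with a unique deepest element, namely $r_H$; combined with the inclusion-reversal $\mathcal{T}(x') \subseteq \mathcal{T}(x)$ whenever $x \preceq_T x'$, this yields $\bigcap_{x \in \mathcal{Q}^-(H)} \mathcal{T}(x) = \mathcal{T}(r_H)$ (including the degenerate case $\mathcal{Q}^-(H) = \emptyset$, where $r_H = r$ and $\mathcal{T}(r) = V$). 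Hence $\mathcal{C}(H) = \mathcal{T}(r_H) \setminus \bigcup_{x \in \mathcal{Q}^+(H)} \mathcal{T}(x) = V(T_i)$. A short argument shows $T_i$ is an actual subtree of $T$ rooted at $r_H$: for $v \in V(T_i)$, every vertex on the $r_H$–$v$ path in $T$ lies below $r_H$ and cannot have an ancestor in $\mathcal{Q}^+(H)$ (else $v$ would too), so it also lies in $V(T_i)$. For the final assertion of part (a): if $x \in \mathcal{Q}(H) \cap \mathcal{C}(H)$, then $x \in \mathcal{T}(x)$ forces $x \notin \mathcal{Q}^+(H)$, so $a_x \le 0$; if $a_x < 0$ then $x \in \mathcal{Q}^-(H)$ gives $x \preceq_T r_H$, while $x \in \mathcal{T}(r_H)$ gives $r_H \preceq_T x$, so $x = r_H$; and if $a_x = 0$ then $x = r$, and $r \in \mathcal{T}(r_H)$ forces $r_H = r = x$.

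For part (b), everything follows from determinism of $\mathcal{A}$. Each $H \in \mathcal{H}_i$ is reachable, so some $f_v$ generates it and $v \in \mathcal{C}(H)$, hence every set in $\mathcal{R}_i$ is non-empty. For any $v \in V$, running $\mathcal{A}$ on $f_v$ produces exactly one history $H_v \in \mathcal{H}_i$, with $v \in \mathcal{C}(H_v)$; and if $v \in \mathcal{C}(H)$ for some $H$, then $f_v$ generates $H$, so $H = H_v$. Thus each $v$ lies in precisely one candidate set, so the sets $\{\mathcal{C}(H)\}_{H \in \mathcal{H}_i}$ are pairwise disjoint and cover $V$, i.e. form a partition. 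The map $H \mapsto \mathcal{C}(H)$ is onto $\mathcal{R}_i$ by definition, and injective because $\mathcal{C}(H) = \mathcal{C}(H')$ forces any $v$ in this common non-empty set to generate both $H$ and $H'$, so $H = H'$.

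The main obstacle is the consistency characterization in the first paragraph: it must correctly handle adaptivity — the queries of later rounds depending on earlier answers — which is exactly where determinism of $\mathcal{A}$ and the round-by-round induction are needed. Once that is established, the remainder is the chain structure of ancestor sets in $T$ together with routine set manipulation, so I expect no further difficulty.
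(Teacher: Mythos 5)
Your proof is correct. For part (a), your approach matches the paper's, with the helpful addition of an explicitly stated ``consistency characterization'' (that $f_v$ generates $H$ iff $f_v(x) = a_x$ for every queried $x$), which the paper relies on implicitly when it asserts the biconditional for the structure of $\mathcal{C}(H)$; spelling out the round-by-round induction that handles adaptivity is a genuine strengthening of rigor, and the ensuing set algebra (intersection of $\mathcal{T}(x)$ over $\mathcal{Q}^-(H)$ collapsing to $\mathcal{T}(r_H)$ via the chain structure of ancestors) gives the same $T_i$. For part (b) you take a cleaner and more direct route: since $\mathcal{A}$ is deterministic, each $f_v$ generates exactly one history $H_v$, so by the very definition of $\mathcal{C}(H)$ one has $v \in \mathcal{C}(H)$ iff $H = H_v$, and the partition property, non-emptiness, and injectivity of $H \mapsto \mathcal{C}(H)$ all drop out in two lines. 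The paper instead proves disjointness by locating the first query $q$ where two histories disagree and arguing that one candidate set lies inside $\mathcal{T}(q)$ and the other inside its complement; this buys a more concrete picture of how candidate sets split across a subtree boundary, but your shorter argument is logically sufficient and avoids that machinery entirely.
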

An illustration of the partition $\mathcal{R}_1$ induced by the queries $Q_1$ issued by an algorithm in round 1 can be seen in  Figure~\ref{fig:tree_partition_bubbles}.

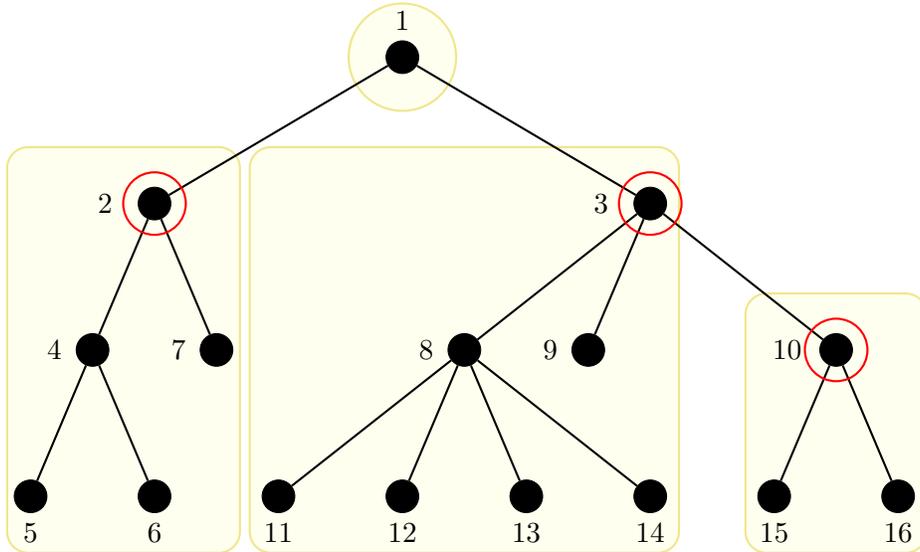
\begin{figure}[h!]
    \centering
    \begin{forest}
        for tree={
            circle,
            draw,
            fill=black,
            inner sep=1.5pt,
            s sep=12mm,      
            l sep=15mm,      
            font=\small,
            edge={thick, black},
            delay={content=\phantom{5}}
        }, 
        tikz={
            \begin{scope}[on background layer]
                \node[bubble, circle, fit=(n1), inner sep=8pt] {};
                \node[bubble, fit=(n2) (n5) (n7), inner ysep=15pt, inner xsep=2.5pt] {};
                \node[bubble, fit=(n3) (n11) (n9), inner ysep=15pt, inner xsep=4.5pt] {};
                \node[bubble, fit=(n10) (n15) (n16),inner ysep=15pt, inner xsep=4.5pt] {};
            \end{scope}
            \node[querycircle, fit=(n2)] {};
            \node[querycircle, fit=(n3)] {};
            \node[querycircle, fit=(n10)] {};
        } 
        [1, name=n1, label=above:1
            [2, name=n2, label={[label distance=2mm]left:2}
                [4, name=n4, label={[label distance=0.5mm]left:4}
                    [5, name=n5, label=below:5]
                    [6, name=n6, label=below:6]
                ]
                [7, name=n7, label={[label distance=0.5mm]left:7}]
            ]
            [3, name=n3, label={[label distance=2mm]left:3}
                [8, name=n8, label={[label distance=0.5mm]left:8}
                    [11, name=n11, label=below:11]
                    [12, name=n12, label=below:12]
                    [13, name=n13, label=below:13]
                    [14, name=n14, label=below:14]
                ]
                [9, name=n9, label={[label distance=0.5mm]left:9}]
                [10, name=n10, label={[label distance=1mm]left:10}
                    [15, name=n15, label=below:15]
                    [16, name=n16, label=below:16]
                ]
            ]
        ]
    \end{forest}
    \caption{Partition of candidate sets (yellow bubbles) with queries $Q_1=\{2, 3, 10\}$ circled in red.}
    \label{fig:tree_partition_bubbles}
\end{figure}

\begin{proof}[Proof of Lemma~\ref{lem:candidate_set_characterization}]
\textbf{Part (a).} Let $H \in \mathcal{H}_i$.
Let $Z \in V$ be the vertex for which $f = f_Z \in \mathcal{F}$.
    We first observe that $\mathcal{Q}^-$ is totally ordered by $\preceq_T$. Let  $v \in V$ be in the candidate set $\mathcal{C}(H)$. Since the function  $f_v$ is consistent with the history $H$, we have:
\begin{itemize}
\item \emph{(C1).} $\forall x \in \mathcal{Q}^-(H)$, $f_v(x) < 0$. This implies $x \preceq_T v$.
\item \emph{(C2).} $\forall x \in \mathcal{Q}^+(H)$, $f_v(x) > 0$. This implies $x \not\preceq_T v$.
\item \emph{(C3).} If $r \in \mathcal{Q}$, $f_v(r)=0$, which is always true.
\end{itemize}

\emph{The root $r_H$ is unique and well-defined.} Since $H$ is reachable, $\mathcal{C}(H) \neq \emptyset$. By (C1), we have  $\mathcal{Q}^-(H) \cup \{r\} \subseteq Anc_T(Z)$. Since $Anc_T(Z)$ is totally ordered, $\mathcal{Q}^-(H) \cup \{r\}$ is totally ordered. Thus $r_H$, the deepest node in $\mathcal{Q}^-(H) \cup \{r\}$, is unique and well-defined. Condition (C1) is equivalent to $r_H \preceq_T v$.

\emph{Structure of $\mathcal{C}(H)$.}
We have $v \in \mathcal{C}(H)$ if and only if $v \in \mathcal{T}(r_H)$ (from C1) and $v \notin \mathcal{T}(x)$ for all $x \in \mathcal{Q}^+(H)$ (from C2).
Thus $\mathcal{C}(H)$ is the set of vertices of the tree $T_i = \mathcal{T}(r_H) \setminus (\cup_{x \in \mathcal{Q}^+(H)} \mathcal{T}(x))$.

We verify that $T_i$ is a connected subtree rooted at $r_H$.
First, we check $r_H \in \mathcal{C}(H)$. We must confirm that for all $x \in \mathcal{Q}^+(H)$, $x \not\preceq_T r_H$. Suppose $x \preceq_T r_H$ for some $x \in \mathcal{Q}^+(H)$. Since $r_H \in \mathcal{Q}^-(H) \cup \{r\}$ and $Z$ is the unique local (and global) minimum, we have $r_H \preceq_T Z$. By transitivity, $x \preceq_T Z$. This implies $f_Z(x) \le 0$. This contradicts $x \in \mathcal{Q}^+(H)$,
since $H$ is the history generated by $f_Z$ which recorded a positive value at $x$. Thus $r_H \in \mathcal{C}(H)$.

Second, we show $T_i$ is connected. For $v \in \mathcal{C}(H)$, consider the path from $r_H$ to $v$. Let $u$ be on this path ($r_H \preceq_T u \preceq_T v$). If $u \notin \mathcal{C}(H)$, then $x \preceq_T u$ for some $x \in \mathcal{Q}^+(H)$. By transitivity, $x \preceq_T v$, contradicting $v \in \mathcal{C}(H)$. Thus the path is entirely in $\mathcal{C}(H)$, proving connectivity.

\emph{Queried vertices in $T_i$.}
If $\mathcal{C}(H) \cap \mathcal{Q} = \emptyset$, then trivially $\mathcal{C}(H) \cap \mathcal{Q} \subseteq \{r_H\}$. Else,
let $y \in \mathcal{C}(H) \cap \mathcal{Q}$.
\begin{itemize}
\item If $y \in \mathcal{Q}^+(H)$, we immediately get a contradiction since $y$ cannot be a candidate.
\item If $y \in \mathcal{Q}^-(H)$, we have $y \preceq_T r_H$ by definition of $r_H$. Since $y \in \mathcal{C}(H) \subseteq \mathcal{T}(r_H)$, we have $r_H \preceq_T y$, so  $y=r_H$.
\item Else, $f(y)=0$, so $y=r =r_H$.
\end{itemize}
Therefore, $\mathcal{C}(H) \cap \mathcal{Q} \subseteq \{r_H\}$.

\textbf{Part (b).} Now we show that $\mathcal{R}_i$ forms a partition of $V$. First we show coverage.
 For all $v \in V$, the function $f_v \in \mathcal{F}$ is a valid input. Since $\mathcal{A}$ is deterministic, $f_v$ generates a unique reachable history $H$. Thus $v \in \mathcal{C}(H)$, so every vertex belongs to some candidate set in $\mathcal{R}_i$.

Second, we show the candidate sets in $\mathcal{R}_i$ are disjoint.  Let $H, \widetilde{H} \in \mathcal{H}_i$ be two distinct reachable histories. 
Since $\mathcal{A}$ is deterministic, the sequence of queries/answers issued is identical in both histories up to the first round where the oracle answers differ at some vertex   $q \in V$.
The oracle response for $q$ is in $\{ -dist_T(r,q), +dist_T(r,q) \}$ for  each input function $f \in \mathcal{F}$. Since the magnitude $dist_T(r,q)$ is fixed by the tree structure, the only information in the response is the sign of $f$ at $q$.
 Since the histories cannot disagree at $r$,  w.l.o.g.  $H$ records a negative sign at $q$ ($f(q) < 0$) and $\widetilde{H}$ records a positive sign ($f(q) > 0$):
\begin{enumerate}
    \item For each candidate $u \in \mathcal{C}(H)$, the function $f_u$ is consistent with $H$, so $f_u(q) < 0$. By  construction of $\mathcal{F}$, this implies $q \in Anc_T(u)$, or equivalently, $u$ is in the subtree $\mathcal{T}(q)$. Hence, $\mathcal{C}(H) \subseteq \mathcal{T}(q)$.
    \item For each candidate $u \in \mathcal{C}(\widetilde{H})$, the function $f_u$ is consistent with $\widetilde{H}$, so $f_u(q) > 0$. By construction of $\mathcal{F}$, this implies $q \notin Anc_T(u)$, meaning $u$ is not in the subtree $\mathcal{T}(q)$. Hence, $\mathcal{C}(\widetilde{H}) \subseteq V \setminus \mathcal{T}(q)$.
\end{enumerate}
Since $\mathcal{T}(q)$ and $V \setminus \mathcal{T}(q)$ are disjoint, we get  $\mathcal{C}(H) \cap \mathcal{C}(\widetilde{H}) = \emptyset$. Thus $\mathcal{R}_i$ is a partition of $V$. It follows immediately that the map $H \mapsto \mathcal{C}(H)$ is a bijection from the reachable histories $\mathcal{H}_i$ to $\mathcal{R}_i$.
\end{proof}

\subsection{Lower Bound for Two Rounds}

We first show a lower bound for two rounds, since it is simpler and illustrates  some of the main ideas that we later build on in the lower bound for $t$ rounds.

\begin{proposition} \label{thm:lb_two_rounds}
    Let $G = (V,E)$ be a connected undirected graph with $n > 1$ vertices. The randomized query complexity of finding a local minimum on $G$ in two rounds is $\Omega(\sqrt{n})$. 
\end{proposition}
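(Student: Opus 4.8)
The plan is to invoke Yao's minimax principle with the hard distribution $\mathcal{D}$ from Definition~\ref{def:distribution_D}: it suffices to show that every \emph{deterministic} two-round algorithm making at most $q$ queries in total outputs the target local minimum $Z$ with probability at most $(q+1)^2/n$ when the input is drawn from $\mathcal{D}$. Since "returning a local minimum of $f_Z$" is equivalent to "returning $Z$" by Remark~\ref{rmk:unique_local_min_in_F}, and $Z$ is uniform on $V$, this bounds the success probability; requiring it to be at least $9/10$ forces $q \geq \sqrt{9n/10}-1 = \Omega(\sqrt{n})$, which is the claim.

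Fix such a deterministic algorithm. Its round-$1$ query set $Q_1$, with $|Q_1|=q_1\le q$, is fixed in advance. For input $f_v$ the queried vertices receiving a negative value are exactly those in $Q_1\cap Anc_T(v)=S_{Q_1}(v)$, so the round-$1$ history is a deterministic function of the signature $S_{Q_1}(Z)$; hence by the Signature Lemma (Lemma~\ref{lem:signatures}) there are at most $q_1+1$ reachable round-$1$ histories $\mathcal{H}_1$. By Lemma~\ref{lem:candidate_set_characterization}(b) their candidate sets partition $V$, so $\Pr_{\mathcal{D}}[H]=|\mathcal{C}(H)|/n$, and conditioned on $H$ the target $Z$ is uniform on $\mathcal{C}(H)$.

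Next I would analyze round $2$ inside a fixed round-$1$ history $H$, writing $C:=\mathcal{C}(H)$. The round-$2$ query set $Q_2=Q_2(H)$ is now fixed, with $|Q_2|\le q-q_1\le q$; exactly as before, among inputs $f_v$ with $v\in C$ the round-$2$ answers depend only on $S_{Q_2}(v)$, so by the Signature Lemma there are at most $q+1$ round-$2$ histories extending $H$. The algorithm's output is a function of the round-$2$ history, and for each such history $H'$ there is at most one vertex of $C$ that both generates $H'$ (i.e.\ lies in $\mathcal{C}(H')$) and equals the output; summing over the $\le q+1$ extensions, at most $q+1$ vertices of $C$ are "winning", so with $Z$ uniform on $C$ we get $\Pr[\text{output}=Z\mid H]\le (q+1)/|C|$. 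Summing over round-$1$ histories, the weight $|\mathcal{C}(H)|$ cancels:
\[
\Pr[\text{success}] \;\le\; \sum_{H\in\mathcal{H}_1} \frac{|\mathcal{C}(H)|}{n}\cdot\frac{q+1}{|\mathcal{C}(H)|} \;=\; \frac{(q+1)\,|\mathcal{H}_1|}{n} \;\le\; \frac{(q_1+1)(q+1)}{n} \;\le\; \frac{(q+1)^2}{n}.
\]

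The one genuinely load-bearing step — and the place to be careful — is the per-history bound $\Pr[\text{output}=Z\mid H]\le(q+1)/|\mathcal{C}(H)|$: it formalizes that a bounded batch of queries cannot shrink a candidate subtree to a single point, so against a uniform target in $\mathcal{C}(H)$ the algorithm is essentially guessing among $\le q+1$ options. Once one sees that this $1/|\mathcal{C}(H)|$ factor exactly offsets the $|\mathcal{C}(H)|/n$ weight of the history, the rest is bookkeeping with the Signature Lemma and Lemma~\ref{lem:candidate_set_characterization}; the only minor subtlety is invoking the uniqueness of the local minimum (Remark~\ref{rmk:unique_local_min_in_F}) to equate "success" with "output $=Z$".
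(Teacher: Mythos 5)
Your proof is correct, and it takes a genuinely different (and arguably cleaner) route than the paper's. The paper lower-bounds the \emph{expected} query cost of a deterministic algorithm under $\mathcal{D}$: it writes $\E[W]=q_1+\frac{1}{n}\sum_\sigma |C_\sigma|\,q_{2,\sigma}$, uses $q_{2,\sigma}\ge k_\sigma-1$, and then applies Cauchy--Schwarz to the constraint $\sum_\sigma k_\sigma\ge cn$ to get $\E[W]\ge 2c\sqrt{n}-2$. You instead bound the \emph{worst-case} query count $q$ directly by a pure counting argument: at most $q_1+1$ round-1 histories, each spawning at most $q+1$ round-2 histories, and each final history can certify at most one winner, so at most $(q_1+1)(q+1)$ inputs are ever answered correctly. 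Both are legitimate instantiations of Yao's principle against the worst-case randomized model the paper defines (worst-case $\ge$ expected, and any fixing of the randomness of a worst-case-$q$ randomized algorithm is a deterministic algorithm making at most $q$ queries on every input). Your approach avoids Cauchy--Schwarz entirely, and the one subtle step --- that $\Pr[\text{output}=Z\mid H]\le (|Q_2(H)|+1)/|\mathcal{C}(H)|$, with the $1/|\mathcal{C}(H)|$ exactly cancelling the history's weight $|\mathcal{C}(H)|/n$ --- is a nice observation that also underlies the paper's $t$-round argument in spirit.

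Two minor remarks. First, you slightly give away a constant: you have $|Q_2(H)|\le q-q_1$, so
\begin{align*}
\Pr[\text{success}] \;\le\; \frac{1}{n}\sum_{H\in\mathcal{H}_1}\bigl(|Q_2(H)|+1\bigr) \;\le\; \frac{(q_1+1)(q-q_1+1)}{n} \;\le\; \frac{(q+2)^2}{4n},
\end{align*}
which would yield $q\ge 2\sqrt{9n/10}-2$, comparable to (in fact marginally better than) the paper's $1.8\sqrt{n}-2$. Your looser $(q+1)^2/n$ still gives $\Omega(\sqrt{n})$, so this is cosmetic. Second, your counting scheme generalizes to $t$ rounds by AM--GM on the per-round branching factors ($\prod_i(q_i+1)\ge cn$ with $\sum_i q_i\le q$ forces $q\ge t(cn)^{1/t}-t$), which is in fact somewhat shorter than the shrinkage-factor machinery the paper deploys for Theorem~\ref{thm:lb_t_rounds}; the paper's expected-cost formulation, however, tracks where the query budget is spent and gives a cleaner handle on the dependence on the success probability $c$.
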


\begin{proof}
We show that for each $c \in (1/n, 1]$, the randomized query complexity of finding a local minimum with success probability at least $c$ is no less than $2c\sqrt{n} - 2$. This implies the statement.

We use Yao's Minimax Principle. Consider the probability distribution $\mathcal{D}$ from Definition~\ref{def:distribution_D}. We  lower bound the expected query complexity of any deterministic two-round algorithm $\mathcal{A}$ that succeeds with probability at least $c$ when the input is drawn from $\mathcal{D}$.

By Remark~\ref{rmk:unique_local_min_in_F},
 for each $v \in V$, vertex  $v$ is the unique local  minimum of $f_v$ in  $G$.
 The target local minimum $Z$ is a random variable chosen uniformly at random from $V$.

    \textbf{Round 1:}
    Since $\mathcal{A}$ is deterministic, it selects a fixed set of queries $Q_1 \subseteq V$ in round 1. Let $q_1 = |Q_1|$.
       The algorithm knows the spanning tree $T$ and its root $r$, so a query at a vertex $x$ reveals  $f_{Z}(x)$, or equivalently  whether $x \preceq_T Z$. Thus, the information gained in round 1 is the signature of $Z$ with respect to $Q_1$, namely $S_{Q_1}(Z) = Anc_T(Z) \cap Q_1$.
   
    Let $\mathcal{K} = \bigcup_{v \in V} \{S_{Q_1}(v)\}$
be the family of all possible signatures with respect to $Q_1$. Invoking  Lemma~\ref{lem:signatures} with $U=V$ and $Q=Q_1$ gives $|\mathcal{K}| \leq q_1 + 1$.

For each  signature $\sigma \in \mathcal{K}$, let $C_\sigma$ be the set of  candidates consistent with signature $\sigma$, defined as:
    \[ C_\sigma := \{ v \in V \mid S_{Q_1}(v) = \sigma \} \,. \]
By Lemma~\ref{lem:candidate_set_characterization}, the candidate set $C_\sigma$ is a sub-tree of $T$ such that none of its vertices have been queried except possibly its root.

    \textbf{Round 2:}
    Depending on the  signature $\sigma \in \mathcal{K}$ observed in round 1, the algorithm submits a new batch of queries $Q_{2,\sigma}$ in round 2. Let $q_{2,\sigma} = |Q_{2,\sigma}|$.
    The oracle reveals the signature of $Z$ with respect to this new set: $S_{Q_{2,\sigma}}(Z) = Anc_T(Z) \cap Q_{2, \sigma}$.
   
    Since  $\mathcal{A}$ is deterministic, it  maps the sequence of query results to a single output vertex. Since  $\mathcal{A}$ knows that $Z \in C_\sigma$, it must distinguish the correct solution from other vertices in $C_\sigma$ using the round 2 signature. We say the algorithm succeeds if it outputs   the correct local minimum. If two distinct vertices $u, v \in C_\sigma$ have the same signature in round 2, the algorithm receives identical inputs and produces the same output, succeeding for at most one of them.
    Therefore, the number of vertices $k_\sigma$ in $C_\sigma$ for which the algorithm succeeds is bounded by the number of distinct round 2 signatures generated by $C_\sigma$:
    \begin{align}
    k_\sigma \le \left| \left\{   \bigcup_{v \in C_{\sigma}} \{ S_{Q_{2,\sigma}}(v) \} \right \} \right|  \,.
    \end{align}
    Applying Lemma~\ref{lem:signatures} with candidate set $U=C_\sigma$ and query set $Q=Q_{2,\sigma}$ implies:
    \begin{equation} \label{eq:round2_bound}
        q_{2,\sigma} \geq k_\sigma - 1 \,.
    \end{equation}

\paragraph{Success Probability and Expected Cost.}
We have
\begin{align} \label{eq:Z_in_C_sigma}
    \Pr(Z \in C_{\sigma}) = \frac{|C_{\sigma}|}{n}\,.
\end{align}
At the end of round 1, we know that $Z$ is in $C_{\sigma}$. Moreover  $Z$ is uniformly distributed in $C_{\sigma}$ since the initial distribution was uniform.  Since $\mathcal{A}$  is deterministic,  in round 2 it performs a fixed batch of queries given the round 1 outcome. These queries allow it to distinguish (and thus correctly identify)  $k_\sigma$ distinct vertices.
Then the probability of success given that $Z$ is in $C_{\sigma}$ is  
\begin{align} \label{eq:A_succeeds_given_Z_in_C_sigma}
    \Pr(\mathcal{A} \mbox{ succeeds} \mid Z \in C_{\sigma}) = \frac{k_{\sigma}}{|C_{\sigma}|}\,.
\end{align}
Combining \eqref{eq:Z_in_C_sigma} and \eqref{eq:A_succeeds_given_Z_in_C_sigma}, the overall success probability is
    \begin{align}
    P_{succ} = \Pr(\mathcal{A} \mbox{ succeeds}) = \sum_{\sigma \in \mathcal{K}} \Pr(Z \in C_\sigma) \cdot \Pr(\mathcal{A} \mbox{ succeeds} \mid Z \in C_{\sigma})  
    = \sum_{\sigma \in \mathcal{K}} \frac{k_{\sigma}}{n} \,.
    \end{align}
 
    We assumed $P_{succ} \geq c$, so $\sum_{\sigma \in \mathcal{K}} k_\sigma \geq cn$.

Let $W$ be the random variable for the number of queries issued by the algorithm.
The expected number of queries is the sum of the queries in Round 1 (which are fixed) and the expected number of queries in Round 2 (which depend on the observed signature).

   We have
    \begin{align}
        \mathbb{E}[W] &= q_1 + \sum_{\sigma \in \mathcal{K}} \Pr(Z \in C_\sigma) \cdot q_{2,\sigma}
        = q_1 + \frac{1}{n} \sum_{\sigma \in \mathcal{K}} |C_\sigma| \cdot  q_{2,\sigma} \,.
    \end{align}

    Using $q_{2,\sigma} \geq k_\sigma - 1$ gives  
    \begin{align} \label{eq:exp_queries_lb1}
        \mathbb{E}[W] &\geq q_1 + \frac{1}{n} \sum_{\sigma \in \mathcal{K}} |C_\sigma| \cdot (k_\sigma - 1) = q_1 + \frac{1}{n} \sum_{\sigma \in \mathcal{K}} |C_\sigma| k_\sigma - \frac{1}{n} \sum_{\sigma \in \mathcal{K}} |C_\sigma| \,.
        \end{align}
 The sets $C_\sigma$ form a partition of $V$, so $\sum_{\sigma \in \mathcal{K}} |C_\sigma| = n$, which substituted in \eqref{eq:exp_queries_lb1} implies  
        \begin{align} \label{eq:expected_W_lb_q_1_and_sum}
        \mathbb{E}[W] &\geq q_1 - 1 + \frac{1}{n} \sum_{\sigma \in \mathcal{K}} |C_\sigma| k_\sigma \,.
    \end{align}

    To obtain a lower  bound, we aim to minimize $\sum_{\sigma \in \mathcal{K}} |C_\sigma| k_\sigma$ subject to $\sum_{\sigma \in \mathcal{K}} k_\sigma \geq cn$. By the Cauchy-Schwarz inequality:
    \[ \left( \sum_{\sigma \in \mathcal{K}} k_\sigma \right)^2 \leq \left( \sum_{\sigma \in \mathcal{K}} \frac{k_\sigma}{|C_\sigma|} \right) \left( \sum_{\sigma \in \mathcal{K}} |C_\sigma| k_\sigma \right) \,. \]
    Since $k_\sigma \leq |C_\sigma|$, we have $\sum_{\sigma \in \mathcal{K}} \frac{k_\sigma}{|C_\sigma|} \leq |\mathcal{K}| \leq q_1 + 1$. Then:
    \begin{align} \label{eq:cn_squared_less_than_component}
    (cn)^2 \leq \left( \sum_{\sigma \in \mathcal{K}} k_\sigma \right)^2 \leq (q_1 + 1) \sum_{\sigma \in \mathcal{K}} |C_\sigma| k_\sigma \implies \sum_{\sigma \in \mathcal{K}} |C_\sigma| k_\sigma \geq \frac{c^2 n^2}{q_1 + 1} \,.
    \end{align}
    Plugging \eqref{eq:cn_squared_less_than_component}  into \eqref{eq:expected_W_lb_q_1_and_sum}:
    \[ \mathbb{E}[W] \geq (q_1 + 1) + \frac{c^2 n}{q_1 + 1} - 2 \,. \]
    The function $g(x) = x + A/x - 2$ is minimized at $x = \sqrt{A}$. Setting $A = c^2 n$ gives
$  \mathbb{E}[W] \geq 2c\sqrt{n} - 2 $ as required.
\end{proof}

\subsection{Lower Bound for $t$ Rounds}

Building on the two-round proof,  we show a lower bound for any number of rounds $t \geq 1$.

\noindent \textbf{Theorem~\ref{thm:lb_t_rounds}.}
\emph{Let $G = (V,E)$ be a connected undirected graph with $n$ vertices. The randomized query complexity of finding a local minimum on $G$ in $t \in \mathbb{N}^*$ rounds is $\Omega(t n^{1/t} - t)$.}
\begin{proof}
We show that for each $c \in (1/n, 1]$, the randomized query complexity of finding a local minimum with success probability at least $c$ is $\Omega(ct n^{1/t} - t)$.

We use Yao's Minimax Principle. Consider the probability distribution $\mathcal{D}$ from Definition~\ref{def:distribution_D}. We  lower bound the expected query complexity of any deterministic $t$-round algorithm $\mathcal{A}$ that succeeds with probability at least $c$ when the input is drawn from $\mathcal{D}$.

By Remark~\ref{rmk:unique_local_min_in_F},
 for each $v \in V$, vertex  $v$ is the unique local  minimum of $f_v$ in  $G$.
 The target local minimum $Z$ is a random variable chosen uniformly at random from $V$. A query at $x$ reveals $f_Z(x)$, which determines whether $x \preceq_T Z$.

\paragraph{Algorithm Execution and Candidate Sets.}

The trajectory of $\mathcal{A}$  when the input is drawn from $\mathcal{D}$ is characterized by the history of interactions. Let $\mathcal{R}_0 = \{V\}$. For $\ell \in [t]$,
let $\mathcal{R}_{\ell}$ be the collection of non-empty candidate sets attainable by the algorithm at the end of round $\ell$.
By Lemma~\ref{lem:candidate_set_characterization}, $\mathcal{R}_{\ell}$ forms a partition of $V$ and each candidate set  $C \in \mathcal{R}_{\ell}$
 is in correspondence to a unique history $H_{\ell}$ reachable by the algorithm at the end of round $\ell$, so $C$ completely determines the round $\ell+1$ query set (denoted $Q_{\ell+1}(C)$).
 
 The oracle's response  in round $\ell+1$ reveals the signature $\sigma = S_{Q_{\ell+1}(C)}(Z)$, which allows $\mathcal{A}$ to restrict the search to the subset of candidates in $C$ that match this signature: $ C_\sigma := \{ v \in C \mid S_{Q_{\ell+1}(C)}(v) = \sigma \}$.
The partition $\mathcal{R}_{\ell+1}$ is the collection of all such non-empty sets $C_\sigma$ obtained from the sets in $\mathcal{R}_{\ell}$.

Moreover, for each $C \in \mathcal{R}_{\ell}$, since the prior distribution $\mathcal{D}$ is uniform over $V$, the posterior distribution of the target $Z$, conditioned on $Z \in C$, remains uniform over $C$.

\paragraph{Branching Factors, Shrinkage, and Expected Cost.}

For $i \in [t]$, let  $C \in \mathcal{R}_{i-1}$ be a candidate set and $Q_i(C)$ the queries submitted in round $i$. Let $q_i(C) = |Q_i(C)|$   and $m_i(C)$ be the number of candidate sets in $\mathcal{R}_i$ that $C$ splits into.  By Lemma~\ref{lem:signatures}, 
\begin{align} \label{eq:q_i_C_at_least_m_i_C_minus_1}
    q_i(C) \ge m_i(C) - 1 \,.
\end{align}
For each $v \in V$, let $\mathcal{C}_{i}(v)$ denote the unique candidate set in $\mathcal{R}_{i}$ containing $v$.
We can write the total number of queries, denoted $\mathcal{W}(Z)$, as:
$ \mathcal{W}(Z) = \sum_{i=1}^t q_i(\mathcal{C}_{i-1}(Z)) \,. $
Taking  expectation over $Z$ drawn uniformly from $V$ and applying inequality \eqref{eq:q_i_C_at_least_m_i_C_minus_1} gives
\begin{align} \label{eq:lb_expected_total_queries_at_least_sum_over_expected_branching_minus_t}
 \E_{Z \sim V}[\mathcal{W}(Z)] =  \sum_{i=1}^t \E_{Z \sim V}\bigl[q_i(\mathcal{C}_{i-1}(Z))\bigr]  &\geq
    \left( \sum_{i=1}^t \E_{Z \sim V}\bigl[ m_i(\mathcal{C}_{i-1}(Z)) \bigr] \right)- t \,.
\end{align}

We define the \emph{shrinkage factor} in round $i$ for input $Z$ as $\rho_i(Z) = \frac{|C_{i-1}(Z)|}{|C_i(Z)|}$.
We now show that the expected branching factor ($\mathbb{E}_{Z \sim V}[ m_i(\mathcal{C}_{i-1}(Z))]$) equals the expected shrinkage factor ($\mathbb{E}_{Z \sim V}[\rho_i(Z)]$).

\paragraph{Showing
$\mathbb{E}_{Z \sim V}[ m_i(\mathcal{C}_{i-1}(Z))] = \mathbb{E}_{Z \sim V}[\rho_i(Z)]$.}
We evaluate the expectations by summing over all $v \in V$.
First, consider the expected branching factor:
\begin{align}
 \mathbb{E}_{Z \sim V} [  m_i(\mathcal{C}_{i-1}(Z))  ] &= \frac{1}{n} \cdot \sum_{v \in V}  m_i(\mathcal{C}_{i-1}(v)) .
\end{align}
We group the summation by the candidate sets in $\mathcal{R}_{i-1}$, obtaining:
\begin{align}  \label{eq:expected_branching_factor}
n \cdot \mathbb{E}_{Z \sim V} [  m_i(\mathcal{C}_{i-1}(Z))  ] &=  \sum_{v \in V}  m_i(\mathcal{C}_{i-1}(v))  = \sum_{C \in \mathcal{R}_{i-1}} \sum_{v \in C} m_i(C)  = \sum_{C \in \mathcal{R}_{i-1}} |C| \cdot m_i(C).
\end{align}
Second, consider the expected shrinkage factor:
$\mathbb{E}_{Z \sim V} [\rho_i(Z)] =  \frac{1}{n} \sum_{v \in V} \rho_i(v) \,.$
For each candidate set $D \in \mathcal{R}_i$, let $\pi(D)$ be the unique candidate set (``parent'') in $\mathcal{R}_{i-1}$ containing $D$. For each $v \in D$, we have $\mathcal{C}_i(v)= D$ and $\mathcal{C}_{i-1}(v)=\pi(D)$, so
\begin{align} \label{eq:expected_shrinkage_factor}
n \cdot \mathbb{E}_{Z \sim V} [\rho_i(Z)]  =
\sum_{v \in V} \rho_i(v) =
\sum_{v \in V} \frac{|\mathcal{C}_{i-1}(v)|}{|\mathcal{C}_i(v)|} = \sum_{D \in \mathcal{R}_i} \sum_{v \in D} \frac{|\pi(D)|}{|D|}
= \sum_{D \in \mathcal{R}_i} |D| \cdot \frac{|\pi(D)|}{|D|} = \sum_{D \in \mathcal{R}_i} |\pi(D)| \,.
\end{align}
For each $C \in \mathcal{R}_{i-1}$ and child $D \in \mathcal{R}_i$ of $C$, we have:
\begin{align} \label{eq:combined_end_terms}
\sum_{D \in \mathcal{R}_i} |\pi(D)| = \sum_{C \in \mathcal{R}_{i-1}} m_i(C) \cdot |C| \,.
\end{align}
Combining \eqref{eq:expected_branching_factor}, \eqref{eq:expected_shrinkage_factor}, and \eqref{eq:combined_end_terms} implies that
\begin{align}  \label{eq:exp_branching_equals_exp_shrinkage}
\mathbb{E}_{Z \sim V} [  m_i(\mathcal{C}_{i-1}(Z))  ]  = \mathbb{E}_{Z \sim V} [\rho_i(Z)]\,.
\end{align}
\paragraph{Simplifying the Lower Bound.}
Using \eqref{eq:exp_branching_equals_exp_shrinkage} in \eqref{eq:lb_expected_total_queries_at_least_sum_over_expected_branching_minus_t} yields  
\begin{align} \label{eq:expectation_sum_over_Z_W_Z_plus_t}
\E_{Z \sim V}[\mathcal{W}(Z)] + t \geq \sum_{i=1}^t \mathbb{E}_{Z \sim V}[\rho_i(Z)] = \mathbb{E}_{Z \sim V} \left[\sum_{i=1}^t \rho_i(Z)\right] \,.
\end{align}
The AM-GM inequality applied pointwise for each $v \in V$ gives  
$ \sum_{i=1}^t \rho_i(v) \ge t \cdot \left(\prod_{i=1}^t \rho_i(v)\right)^{1/t} \,.
$
The product of shrinkage factors telescopes:
\begin{align}
\prod_{i=1}^t \rho_i(v) = \frac{|\mathcal{C}_0(v)|}{|\mathcal{C}_1(v)|} \cdots \frac{|\mathcal{C}_{t-1}(v)|}{|\mathcal{C}_t(v)|} = \frac{|\mathcal{C}_0(v)|}{|\mathcal{C}_t(v)|} = \frac{n}{|\mathcal{C}_t(v)|} \,.
\end{align}
We get
\begin{align} \label{eq:sum_rho_i_over_all_v_lb}
\sum_{i=1}^t \rho_i(v) \geq  t \left(\frac{n}{|\mathcal{C}_t(v)|}\right)^{\frac{1}{t}} \,.
\end{align}
Taking  expectation in \eqref{eq:sum_rho_i_over_all_v_lb} and using \eqref{eq:expectation_sum_over_Z_W_Z_plus_t} gives
\begin{align} \label{eq:lb_t_intermediate}
\E_{Z \sim V}[\mathcal{W}(Z)] + t   \ge \mathbb{E}_{Z \sim V} \left[ t \left(\frac{n}{|\mathcal{C}_t(Z)|}\right)^{\frac{1}{t}} \right] = t \cdot n^{1/t} \cdot \mathbb{E}_{Z \sim V} \left[|\mathcal{C}_t(Z)|^{-\frac{1}{t}}\right].
\end{align}

\paragraph{Success Condition and Optimization.}
Denote the set of vertices for which $\mathcal{A}$ succeeds by  
\begin{align}
\Psi = \left\{ v \in V \mid \mathcal{A} \mbox{ outputs } v \mbox{ on input } f_v \right\}\,.
\end{align}
If a final candidate set $C \in \mathcal{R}_t$ contains at least two distinct vertices $u$ and $v$, then $\mathcal{A}$ observed identical histories for $f_u$ and $f_v$, so it produced the same output on both even though  their local minima are distinct ($u$ for  $f_u$ and $v$ for $f_v$). Thus  $\mathcal{A}$ cannot succeed on both, so
\begin{align}  \label{eq:C_cap_Psi_at_most_one}
|C \cap \Psi| \le 1 \; \forall C \in \mathcal{R}_t \,.
\end{align}

The total number of final candidate sets $L := |\mathcal{R}_t|$  satisfies
\begin{align}
L = \sum_{C \in \mathcal{R}_t} 1 \ge \sum_{C \in \mathcal{R}_t} |C \cap \Psi| = | \Psi | \,. 
\end{align}
The success probability of $\mathcal{A}$ on distribution  $\mathcal{D}$ is $P_{succ} := |\Psi|/n \geq c$. Thus $L \geq |\Psi| \geq \lceil cn \rceil =: L_{\min}$.

We seek a lower bound on $M := \mathbb{E}_{Z\sim V}\left[|\mathcal{C}_t(Z)|^{-\frac{1}{t}}\right] = \frac{1}{n} \sum_{v \in V} |\mathcal{C}_t(v)|^{-\frac{1}{t}}$.  Let $\gamma = 1 - 1/t$. Since $t \ge 1$ and $0 \le \gamma < 1$,  
\begin{align} \label{eq:identity_nM_sum_of_C_to_power_gamma}
n M = \sum_{v \in V} |\mathcal{C}_t(v)|^{-\frac{1}{t}} = \sum_{C \in \mathcal{R}_t} \sum_{v \in C} |C|^{-\frac{1}{t}} = \sum_{C \in \mathcal{R}_t} |C|^\gamma \,.
\end{align}
To obtain a lower bound on the query complexity we solve the following optimization problem:
\begin{equation} \label{eq:optimization_problem}
\begin{aligned}
& \underset{\mathcal{R}_t}{\text{minimize}}
& & \sum_{C \in \mathcal{R}_t} |C|^\gamma \\
& \text{subject to}
& & \sum_{C \in \mathcal{R}_t} |C| = n, \\
& & & |C| \ge 1, \quad \forall C \in \mathcal{R}_t, \\
& & & |\mathcal{R}_t| \geq L_{\min} = \lceil cn \rceil.
\end{aligned}
\end{equation}

\emph{Case 1: $t=1$.} The objective function simplifies to counting the sets:
$ \sum_{C \in \mathcal{R}_1} |C|^\gamma = \sum_{C \in \mathcal{R}_1} 1 = L \,. $
Problem~\eqref{eq:optimization_problem} reduces to minimizing $L$ subject to $L \ge \lceil cn \rceil$. The minimum is attained at $ L = \lceil cn \rceil$, which substituted back into the query complexity bound \eqref{eq:lb_t_intermediate} gives
\begin{align}
\E_{Z \sim V}[\mathcal{W}(Z)] + 1 &\ge 1 \cdot n^{1/1} \cdot \E_{Z \sim V}\left[ |\mathcal{C}_1(Z)|^{-1/1} \right]  = n \cdot \frac{1}{n} \sum_{C \in \mathcal{R}_1} |C|^0  \ge \lceil cn \rceil \,.
\end{align}
Therefore, $\E_{Z \sim V}[\mathcal{W}(Z)] \ge \lceil cn \rceil - 1$, which provides the lower bound $\Omega(cn)$.

\emph{Case 2: $t \geq 2$.}  In this case $0 < \gamma < 1$.  
To lower bound the objective of problem \eqref{eq:optimization_problem}, we  consider a continuous version of the optimization problem and proceed in two steps.

\begin{itemize}
\item \emph{Step 1: Optimization for a fixed number of final candidate sets $L$.}
Let $\vec{x} = (x_1, \dots, x_L)$ represent the sizes of the $L$ sets in $\mathcal{R}_t$. Consider the  continuous optimization problem:
\begin{equation} \label{eq:vector_optimization}
\begin{aligned}
& \underset{\vec{x} \in \mathcal{B}}{\text{minimize}}
& & \Phi(\vec{x}) = \sum_{i=1}^L x_i^\gamma, \\
& \text{where}
& & \mathcal{B} = \left\{ \vec{x} \in \mathbb{R}^L \;\middle|\;  x_i \ge 1 \; \forall i \in [L] \mbox{ and } \sum_{i=1}^L x_i = n \right\}.
\end{aligned}
\end{equation}
We determine the convexity of $\Phi$ by computing its Hessian  $H(\vec{x})$. For $i,j \in [L]$, the second partial derivatives are:
\[ \frac{\partial^2 \Phi}{\partial x_i \partial x_j} = \begin{cases} \gamma(\gamma-1)x_i^{\gamma-2} & \text{if } i=j \\ 0 & \text{if } i \neq j \end{cases} \,. \]
Since $0 < \gamma < 1$ and $x_i \geq 1$ $\forall i \in [L]$, we have $\gamma(\gamma-1) < 0$ and $x_i^{\gamma-2} > 0$.
 Thus the Hessian  is a diagonal matrix with strictly negative entries, meaning it is negative definite everywhere on $\mathcal{B}$. Thus $\Phi$ is strictly concave on $\mathcal{B}$.
 Since $\mathcal{B}$ is a bounded convex polytope, there is a  global minimum at a vertex of the polytope.
The vertices of $\mathcal{B}$ are tuples where $L-1$ variables take the minimum value $1$, and the remaining variable takes the value $n - (L-1)$. Due to symmetry, all such vertices yield the same value for the objective.

Let $G : \mathbb{R} \to \mathbb{R}$ be the function
$     G(x) = (x-1) \cdot 1^\gamma + (n - x + 1)^\gamma = x - 1 + (n - x + 1)^\gamma \,.
$
Then for a fixed $L$ the minimum value of the objective in problem \eqref{eq:vector_optimization} is $G(L)$.

\item \emph{Step 2: Optimization over the number of sets $L$.}
We now minimize $G(L)$ with respect to $L$ subject to the constraint $L \ge \lceil cn \rceil$. Treating $L$ as a continuous variable, we analyze the first derivative:
\begin{align}
G'(L) =  1 - \gamma(n - L + 1)^{\gamma-1} = 1 - \frac{1 - {1}/{t}}{(n - L + 1)^{1/t}}  \,.
\end{align}
Since $L \le n$, we have  $(n - L + 1)^{1/t} \ge 1$.  Thus $G'(L) > 0$ for all $L \in [\lceil cn \rceil, n]$, so  $G$ is strictly increasing on this interval. Its  minimum is attained at  $L_{\min} = \lceil cn \rceil$.

This gives the lower bound on the objective:
\begin{align} \label{eq:lb_nM_case2}
 \sum_{C \in \mathcal{R}_t} |C|^\gamma \ge G(L_{\min}) = L_{\min} - 1 + (n - L_{\min} + 1)^{1-1/t}  \,.
\end{align}
\end{itemize}

Using \eqref{eq:lb_nM_case2} and \eqref{eq:identity_nM_sum_of_C_to_power_gamma} in \eqref{eq:lb_t_intermediate}, we can lower bound the expected query complexity as follows:
\begin{align} \label{eq:lb_expected_queries_case_2_almost_done}
\E_{Z \sim V}[\mathcal{W}(Z)] + t &\geq t \cdot n^{1/t} \cdot \mathbb{E}_{Z \sim V} \left[|\mathcal{C}_t(Z)|^{-\frac{1}{t}}\right] = t \cdot n^{1/t} \cdot \frac{1}{n} \sum_{v \in V} |\mathcal{C}_t(v)|^{-\frac{1}{t}}
=  t \cdot n^{1/t} \cdot  \frac{1}{n} \sum_{C \in \mathcal{R}_t} |C|^\gamma \notag \\
&
\geq  t \cdot n^{1/t} \cdot \frac{G(L_{\min})}{n} = t \cdot n^{{1}/{t}-1} (L_{\min} - 1) + t \cdot n^{{1}/{t}-1} (n - L_{\min} + 1)^{1-1/t} \,.
\end{align}

Since $cn \le L_{\min} = \lceil cn \rceil < cn + 1$, we have
\begin{align} \label{eq:inequalities_first_and_second_part_of_last_term}
    t n^{\frac{1}{t}-1} (L_{\min} - 1) \ge  c t n^{1/t} - t n^{\frac{1}{t}-1} \; \;  \mbox{ and } \; \; t n^{\frac{1}{t}-1} (n - L_{\min} + 1)^{1-1/t} \ge t (1-c)^{1-1/t} \,.
\end{align}
Using \eqref{eq:inequalities_first_and_second_part_of_last_term} in \eqref{eq:lb_expected_queries_case_2_almost_done}, we obtain:
$ \E_{Z \sim V}[\mathcal{W}(Z)] \ge c t n^{1/t} + t(1-c)^{1-1/t} - t - t n^{\frac{1}{t}-1} \,.$
Since $t n^{\frac{1}{t}-1} \to_{n \to \infty} 0$, the randomized query complexity is $\Omega(t \cdot c \cdot n^{1/t} - t)$.
\end{proof}

\section*{Methodology}

This work was developed through an interactive collaboration with Google's Gemini-based models, in particular, Gemini Deep Think and its advanced Google-internal variants. The results presented here are the product of a ``scaffolded'' reasoning process, where the authors provided the high-level conceptual direction and  lemma statements, while the model synthesized initial proofs which we then rigorously verified and refined. 

For the deterministic upper bound (Theorem~\ref{thm:k_rounds}), the authors asked the model to exploit the separation number of graphs; in response, the model successfully found a separation-based strategy for two rounds and generalized it to  $t \geq 2$ rounds, identifying the  recursive decomposition and  the   folklore ``Shattering Lemma'' (Lemma~\ref{lem:shattering}). 

The development of the randomized lower bounds (Theorem~\ref{thm:lb_t_rounds}) involved an iterative feedback loop. We asked the model to  construct a lower bound for trees in two rounds while giving it several papers from prior work as examples (\cite{branzei2022query,santha2004quantum}). This led the model  to propose the specific distance-based  function family $\mathcal{F}$. We then asked it to generalize this construction to handle local search on any graph in two rounds;
in response, it proposed using the lower bound for trees by fixing first a spanning tree of the original graph.
This proof was then generalized to $t$ rounds.

Verification and correction were a critical aspect of this process; e.g. the authors identified a circular argument in the model's initial proof for the bijection between histories and candidate sets (part b of Lemma~\ref{lem:candidate_set_characterization}) and provided the model with a hint for resolving it. Finally, the model served as an adversarial check on some of our hypotheses, synthesizing the parallel steepest descent algorithm (Proposition~\ref{thm:t_rounds_const_deg_generalized}) to demonstrate that linear lower bounds do not hold for local search in two rounds on constant-degree expanders. Our paper adds to a growing body of literature on TCS and math written with the help of AI models~\cite{georgiev2025mathematicalexplorationdiscoveryscale,nagda2025reinforcedgenerationcombinatorialstructures,bubeck2025earlyscienceaccelerationexperiments,lu2025solvinginequalityproofslarge,chervov2025cayleypy,sellke2025learningcurvemonotonicitymaximumlikelihood,sothanaphan2026resolutionerdhosproblem728}.

\bibliographystyle{alpha}

\bibliography{local_search_bib}

\appendix

\section{Appendix: Algorithms} \label{app:algorithms}

In this section we include the proofs of several lemmas from Section~\ref{sec:algorithms}.

\subsection{Deterministic Algorithm}

The next lemma is a folklore result; we include its statement and proof for completeness.

\textbf{Lemma~\ref{lem:shattering} (restated).} \emph{Let $G=(V,E)$ be a graph with $n$ vertices and separation number $s$. For any parameter $K \in [1, n]$, there exists a subset of vertices $S \subseteq V$  such that every connected component of the induced subgraph $G[V \setminus S]$ has size at most $K$, and  $|S| < 3sn/K$.}
\begin{proof}
We define a recursive procedure to construct the accumulated separator set $S$.

\paragraph{The Recursive Decomposition Algorithm.}
We define a function \texttt{FindSeparator(H, K)} that takes an induced subgraph $H$ of $G$ and the parameter $K$, and returns a separator set $S_H \subseteq V(H)$:

\begin{enumerate}
    \item If $|V(H)| \le K$, return $S_H = \emptyset$.
    Else  $|V(H)| > K$. Since $H$ is a subgraph of $G$, its separation number is at most $s$. There exists an $(s, 2/3)$-separator $R_0$ in $H$ ($|R_0| \le s$). Let $V(H) = A \cup B \cup R_0$ be the corresponding partition, where $|A|, |B| \le (2/3)|V(H)|$ and $E(A,B) = \emptyset$.
    \item \emph{Recursive Calls:} Let
    $S_A = \texttt{FindSeparator}(G[A], K)$ and
    $S_B = \texttt{FindSeparator}(G[B], K)$.
    \item \emph{Combine:} Return $S_H = R_0 \cup S_A \cup S_B$.
\end{enumerate}

We execute this algorithm starting with $H=G$. Let $S$ be the returned set.

\paragraph{Correctness (Component Size).}
We prove by induction on the recursion depth that the call $\texttt{FindSeparator(H, K)}$ returns a set $S_H$ such that all connected components of $G[V(H) \setminus S_H]$ have size at most $K$.

\begin{itemize}
    \item \emph{Base Case:} If $|V(H)| \le K$, the algorithm returns $S_H = \emptyset$. The remaining subgraph is $G[V(H)]$. Since the entire subgraph has at most $K$ vertices, any connected component within it must also have size at most $K$.

    \item \emph{Inductive Step:} If $|V(H)| > K$, the algorithm finds separator $R_0$ and balanced components $A$, $B$ such that  $E(A,B) = \emptyset$ and returns $S_H = R_0 \cup S_A \cup S_B$. We have \[
    V(H) \setminus S_H = (A \cup B \cup R_0) \setminus (R_0 \cup S_A \cup S_B) = (A \setminus S_A) \cup (B \setminus S_B).\]
    Since $E(A,B) = \emptyset$, there are no edges between $A \setminus S_A$ and $B \setminus S_B$. Therefore, any connected component of $G[V(H) \setminus S_H]$ must be a connected component of \emph{either} $G[A \setminus S_A]$ \emph{or} $G[B \setminus S_B]$.
    By the inductive hypothesis on the recursive calls, the components of $G[A \setminus S_A]$ and $G[B \setminus S_B]$ are all of size at most $K$. This completes the inductive step.
\end{itemize}

The  correctness follows from the top-level call with $H=G$.

\paragraph{Analysis of the Size of S (Charging Argument).}
We analyze the total size of the ``accumulated'' separator $S$ using an amortized analysis (a charging argument). The sum of the sizes of all the individual separators ($R_0$) introduced at every step of the recursion is $|S|$.

We distribute the cost of the separator among the vertices. When processing a subgraph $H$ with  $|V(H)|>K$, we introduce a separator $R_0$ of size $|R_0| \le s$. We charge this cost equally to the vertices in $V(H)$. The charge per vertex $v \in V(H)$ at this step is defined as:
\begin{align}
\text{Charge}(v, H) := \frac{|R_0|}{|V(H)|} \le \frac{s}{|V(H)|}.
\end{align}

The total size of the accumulated separator $|S|$ is equal to the total charge accumulated across all steps. Let $C(v)$ be the total charge accumulated by vertex $v$ throughout the entire process. Then $|S| = \sum_{v \in V} C(v)$.

We now bound $C(v)$ for an arbitrary vertex $v$. Consider the sequence of subgraphs $H_0, H_1, \ldots$ in the recursion tree that contain $v$, where $H_0 = G$.

For each $i \ge 0$, the algorithm's action on $H_i$ depends on its size:
\begin{itemize}
    \item If $|V(H_i)| \le K$ (base case), no separator is created, $v$ accumulates no charge, and this sequence for $v$ terminates.
    \item If $|V(H_i)| > K$, the algorithm processes $H_i$. It finds a separator (denoted $R_i$), such that  $|R_i| \le s$, the remaining parts $A_i,B_i$ satisfy  $|A_i|, |B_i| \leq 2/3 |V(H_i)|$, $E(A_i, B_i) = \emptyset$, and $V(H_i) = A_i \cup B_i \cup R_i$. At this step, $v$ accumulates a charge $\text{Charge}(v, H_i) \le s / |V(H_i)|$.
    \begin{itemize}
        \item If $v \in R_i$, then $v$ is removed, and the sequence for $v$ terminates.
        \item If $v \notin R_i$, then $v$ is in either $A_i$ or $B_i$. We define $H_{i+1}$ as the subgraph $G[A_i]$ or $G[B_i]$ that contains $v$, and the process continues.
    \end{itemize}
\end{itemize}

Let $t$ be the index of the last subgraph in this sequence for which $|V(H_i)| > K$. The vertex $v$ only accumulates charges from these subgraphs $H_0, \ldots, H_t$. The total charge is bounded by:
\begin{align} \label{eq:C_V_upper_bound}
C(v) \le \sum_{i=0}^{t} \frac{s}{|V(H_i)|}.
\end{align}

Crucially, by the definition of the $(s, 2/3)$-separator, the size of the subgraphs in this sequence decreases geometrically with
$|V(H_{i+1})| \le ({2}/{3}) |V(H_i)|$, and so
\begin{align}
|V(H_i)| \ge \left(\frac{3}{2}\right)^{t-i} |V(H_{t})|.
\end{align}
Since $H_{t}$ was processed, we know $|V(H_{t})| > K$. Thus, $|V(H_i)| > (3/2)^{t-i} K$.
We substitute this lower bound back into inequality \eqref{eq:C_V_upper_bound}:
\begin{align} \label{eq:precise_ub_on_charge_of_v}
C(v) & \leq \sum_{i=0}^{t} \frac{s}{|V(H_i)|} < \sum_{i=0}^{t} \frac{s}{(3/2)^{t-i} K} = \frac{s}{K} \sum_{i=0}^{t} \left(\frac{2}{3}\right)^{t-i} = \frac{s}{K} \sum_{j=0}^{t} \left(\frac{2}{3}\right)^{j} \leq \frac{3s}{K}\,.
\end{align}

Using \eqref{eq:precise_ub_on_charge_of_v} we can  bound the total size of the accumulated separator $S$ by
$|S| = \sum_{v \in V} C(v) < n \cdot \frac{3s}{K}$, which completes the proof.
\end{proof}

\noindent \textbf{Lemma~\ref{claim:non_exploration} (restated).} \emph{In the setting of Theorem~\ref{thm:k_rounds}, let $i \in \{2, \ldots, t\}$ and $C \in \mathcal{C}_{i-1}$. If the running minimum $v_{i-1}$ has no neighbors in $C$, then no vertex in $C$ is ever queried by the algorithm.}
\begin{proof}
    We prove by induction on the round $r \in \{i, \ldots, t\}$ that $Q_r \cap C = \emptyset$ and that $v_r$ has no neighbors in $C$.

    \emph{Base Case ($r=i$):}
    By assumption $v_{i-1}$ has no neighbors in $C$.
    In round $i$, the algorithm finds the set of components $\mathcal{A}_{i-1} \subseteq \mathcal{C}_{i-1}$  adjacent to $v_{i-1}$.
    Since $v_{i-1}$ is not adjacent to $C$, we get $C \notin \mathcal{A}_{i-1}$.
    Thus, any component $D \in \mathcal{A}_{i-1}$ selected for querying must be distinct from $C$. By the decomposition property, distinct components at level $i-1$ are disjoint and disconnected (i.e., $E(D, C) = \emptyset$).
    Because the query set $Q_i$ is a subset of the union of these selected components ($Q_i \subseteq \bigcup_{D \in \mathcal{A}_{i-1}} D$), we get that $Q_i \cap C = \emptyset$ and no vertex in $Q_i$ is adjacent to $C$.
    Finally, since neither $v_{i-1}$ nor any vertex in $Q_i$ is adjacent to $C$, the updated minimum $v_i \in \{v_{i-1}\} \cup Q_i$ has no neighbors in $C$.

    \emph{Inductive Step ($r > i$):}
    Assume the hypothesis holds for round $r-1$, so $v_{r-1}$ has no neighbors in $C$.
    In round $r$, the algorithm queries a component $D \in \mathcal{C}_{r-1}$ only if it is adjacent to $v_{r-1}$. By the hierarchical decomposition, $D$ is a subgraph of a unique ancestor component $A \in \mathcal{C}_{i-1}$.
    \begin{itemize}
        \item If $A = C$, then $D \subseteq C$. Adjacency to $D$ would imply $v_{r-1}$ is adjacent to $C$, contradicting the inductive hypothesis. Thus, no sub-component of $C$ is selected ($Q_r \cap C = \emptyset$).
        \item If $A \neq C$, then $D \subseteq A$. Since distinct components at level $i-1$ are disconnected ($E(A, C) = \emptyset$), no vertex in $D$ is adjacent to $C$.
    \end{itemize}
    Thus, $Q_r$ contains no vertices in $C$ and no vertices adjacent to $C$. Since $v_{r-1}$ is not adjacent to $C$ (by hypothesis), the updated minimum $v_r \in \{v_{r-1}\} \cup Q_r$ has no neighbors in $C$.
\end{proof}

\begin{lemma} \label{lem:ub_t_minus_one}
Let $x \in \mathbb{R}$ and $t \in \mathbb{N}$ such that $x > 1$ and $t \geq 2$. Then
$
\frac{1 - x^{2/t - 1}}{x^{1/t} - 1} \leq t - 1\,.
$
\end{lemma}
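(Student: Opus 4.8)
The plan is to reduce the claim to an elementary geometric-series estimate via the substitution $y := x^{1/t}$. Since $x > 1$ and $t \geq 2$, we have $y > 1$, and the exponent identity $x^{2/t - 1} = \bigl(x^{1/t}\bigr)^{\,2-t} = y^{\,2-t}$ rewrites the left-hand side as $\dfrac{1 - y^{\,2-t}}{y - 1}$. Because $2 - t \le 0$ and $y > 1$, we have $y^{\,2-t} \le 1$, so the numerator is nonnegative while the denominator is positive; thus the quantity is well-defined and nonnegative.

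Next I would split on the value of $t$. When $t = 2$ the numerator is $1 - y^{0} = 0$, so the left-hand side equals $0 \le 1 = t-1$ and there is nothing more to prove. For $t \ge 3$, set $m := t - 2 \ge 1$ and write $1 - y^{\,2-t} = 1 - y^{-m} = \dfrac{y^m - 1}{y^m}$. Using the finite geometric sum $\dfrac{y^m - 1}{y - 1} = \sum_{k=0}^{m-1} y^k$, the left-hand side becomes
\[
\frac{1 - y^{\,2-t}}{y-1} \;=\; \frac{1}{y^m}\sum_{k=0}^{m-1} y^k \;=\; \sum_{k=0}^{m-1} y^{\,k-m}.
\]

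Finally, since $y > 1$ and each exponent $k-m$ with $0 \le k \le m-1$ is strictly negative, every summand satisfies $y^{\,k-m} < 1$; hence the sum of these $m$ terms is strictly less than $m = t-2 < t-1$, which yields the stated inequality (with room to spare). I do not expect any genuine obstacle here: the only points requiring care are handling the boundary case $t=2$ separately and making the exponent manipulation $x^{2/t - 1} = y^{\,2-t}$ explicit so that the geometric-sum step is clearly valid.
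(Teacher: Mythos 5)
Your proposal is correct and follows essentially the same route as the paper: substitute $y = x^{1/t}$, dispose of $t=2$ separately, and reduce the remaining case to the geometric sum $\sum_{j=1}^{t-2} y^{-j} < t-2 < t-1$. Your intermediate factorization through $\tfrac{y^m-1}{y^m(y-1)}$ is a minor presentational variant and reindexes to the identical sum.
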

\begin{proof}
 Let $y := x^{1/t}$. Then $y  > 1$ since $x > 1$.
Substituting $x = y^t$, the  inequality becomes
$
\frac{1 - y^{2-t}}{y - 1} \leq t-1\,.
$
This is immediate for $t = 2$, so  assume $t > 2$. Let $k = t - 2$.

Expanding the fraction as a geometric series yields
$
\frac{1 - y^{-k}}{y - 1} = \sum_{j=1}^{k} y^{-j}\,.
$
Then $y^{-j} < 1$ since $y > 1$, so the sum is strictly bounded by the number of terms, $k$. Thus
$
\frac{1 - y^{2-t}}{y - 1} < k = t - 2 < t - 1
$.
\end{proof}

\subsection{Randomized Algorithm}

\noindent \textbf{Lemma~\ref{lem:expected_rank_replacement} (restated).} \emph{Let $Q$ be a multiset of $q$ vertices sampled uniformly at random with replacement from $V$. Let $v_{\min}$ be the minimum vertex in $Q$ with respect to $\prec$. Then }
\begin{align} \E[\mathcal{L}(v_{\min})] < \frac{n}{q+1} \,.
\end{align}
\begin{proof}
Let $R = \Rank(v_{\min})$. Since the steepest descent path strictly decreases in rank at every step, its length is bounded by the starting rank: $\mathcal{L}(v_{\min}) \le R - 1$. Thus it suffices to show  $\E[R] < \frac{n}{q+1} + 1$.

We derive the bound starting from the definition of the expected value for the integer-valued random variable $R \in \{1, \dots, n\}$:
    $\E[R] = \sum_{k=1}^n k \cdot \Pr(R = k) =  \sum_{j=1}^n \Pr(R \ge j).$

The event $R \ge j$ occurs if and only if all $q$ sampled vertices have a rank of at least $j$. Since there are $n - (j - 1)$ such vertices, the probability is:
$
    \Pr(R \ge j) = \left( 1 - \frac{j-1}{n} \right)^q.
$
Substituting this  into the expectation sum with the change of variable $i = j-1$ gives:
$
    \E[R] = \sum_{i=0}^{n-1} \left( 1 - \frac{i}{n} \right)^q.
$
Let $g: \mathbb{R} \to \mathbb{R}$ be $g(x) = (1 - x/n)^q$. Since $g$ is  strictly decreasing on $[0, n]$, for each  $i \in \mathbb{N}^*$,  
$
    g(i) < \int_{i-1}^{i} g(x) \, dx.
$
Then we can bound the expectation by
\begin{align}
    \E[R] = g(0) + \sum_{i=1}^{n-1} g(i) < 1 + \sum_{i=1}^{n-1} \int_{i-1}^{i} g(x) \, dx 
    < 1 + \int_{0}^{n} \left( 1 - \frac{x}{n} \right)^q \, dx.
\end{align}

Setting $u = 1 - x/n$, we have:
$
    \int_{0}^{n} \left( 1 - \frac{x}{n} \right)^q \, dx 
    = \frac{n}{q+1}.
$
Therefore $\E[R] < 1 + \frac{n}{q+1}$, as required.
\end{proof}

\noindent \textbf{Lemma~\ref{lem:ball_size_rigorous} (restated).} \emph{Let $G = (V,E)$ be a graph with maximum degree $\Delta \ge 2$. Let $v \in V$ and  $\rho \in \mathbb{N}^*$.  
      If $\Delta=2$, then $|B(v; \rho)| \le 2\rho+1$.
        Else if $\Delta \ge 3$, then $|B(v; \rho)| < \frac{\Delta}{\Delta-2}(\Delta-1)^\rho$.}
\begin{proof}
    Let $n_k$ denote the number of vertices in distance exactly $k$ from $v$. We have $n_0 = 1$ (containing just $v$) and $n_1 \le \Delta$ since $v$ has at most $\Delta$ neighbors.
   
    For $k \ge 2$, let  $u$ be a vertex in  distance $k-1$ from $v$. Then  $u$ has an edge  to a vertex $w$ in distance $k-2$ from $v$. Thus $u$ has at most $\Delta - 1$ remaining edges connecting to vertices at distance $k$ from $v$. This yields the recurrence:
    $
        n_k \le n_{k-1}(\Delta - 1)$, so $n_k \le \Delta(\Delta-1)^{k-1}$.
    The size of the ball is the sum of the sizes of these layers:
    \begin{align}  \label{eq:ub_ball_rho}
        |B(v; \rho)| = n_0 + \sum_{k=1}^\rho n_k \le 1 + \sum_{k=1}^\rho \Delta(\Delta-1)^{k-1}.
    \end{align}
    Inequality \eqref{eq:ub_ball_rho} implies $|B(v; \rho)| \le 1 + 2\rho$ when $\Delta=2$ and $|B(v; \rho)| \leq \frac{\Delta(\Delta-1)^\rho}{\Delta-2}$ when $\Delta \geq 3$.
\end{proof}

\noindent \textbf{Proposition~\ref{thm:t_rounds_const_deg_generalized} (restated).}
\emph{Let $G=(V,E)$ be a graph with $n$ vertices and maximum degree $\Delta$. The randomized query complexity of finding a local minimum in $t \ge 2$ rounds is $O(\sqrt{n} + t)$ when $\Delta \leq 2$ and $O\bigl(\frac{n}{t \cdot \log_\Delta n} + t \Delta^2 \sqrt{n}\bigr)$ when $\Delta \geq 3$.}
\begin{proof}
    We analyze Algorithm~\ref{alg:wssd} and later set $q_1$ and $r$ to obtain the required bounds. Let $T = t-1$ be the number of parallel search rounds. Let $W$ be the random variable for the total query cost, consisting of the sampling cost $q_1$ and the queries performed in the $T$ search steps. We have:
    \begin{equation} \label{eq:total_cost_upper_bound}
        W \le q_1 + \sum_{i=1}^{T} |B(v^{(i-1)}; r+1)| \le q_1 + T \cdot \max_{v \in V} |B(v; r+1)| \,.
    \end{equation}
    For each  $j \in \{0, 1, \ldots, T\}$, we have $dist(v^{(0)}, v^{(j)}) = j \cdot r$.
    The algorithm succeeds in finding a local minimum when the steepest descent path starting from the minimal sampled vertex $v^{(0)}$ has length $\mathcal{L}(v^{(0)}) \leq T \cdot r$. Applying Markov's inequality the non-negative random variable $\mathcal{L}(v^{(0)})$ and using Lemma~\ref{lem:expected_rank_replacement}, we can bound the probability of the failure event:
    \begin{align}
        \Pr(\text{Failure}) \leq \Pr\bigl(\mathcal{L}(v^{(0)}) > T r\bigr) \le  \Pr\bigl(\mathcal{L}(v^{(0)}) \geq T r\bigr) \leq \frac{\E[\mathcal{L}(v^{(0)})]}{T r} < \frac{n}{Tr(q_1+1)} \,.
    \end{align}
    To ensure  failure probability less than $1/10$, we require that  
        $(q_1 + 1) T r \ge 10n$ ($\dag$).

    \paragraph{Case 1: $\Delta \leq 2$.} The statement is immediate if  $\Delta=1$, so let $\Delta =2$.
    Set
    $
        q_1 = \left\lceil \sqrt{20n} \, \right\rceil$ and $r = \bigl\lceil \frac{\sqrt{5n}}{T} \bigr\rceil \,.
    $
    Then
    $
        (q_1 + 1) T r > \sqrt{20n} \cdot T \cdot \frac{\sqrt{5n}}{T} = 10n
    $, so ($\dag$) holds.
   
     Lemma~\ref{lem:ball_size_rigorous} yields  $\max_{v \in V} |B(v; r+1)| \leq 2r+3$, which combined with \eqref{eq:total_cost_upper_bound} bounds the query cost:
    \begin{align}
        W &\le q_1 + T(2r + 3)
        < (\sqrt{20n} + 1) + 2T \Bigl( \frac{\sqrt{5n}}{T} + 1 \Bigr) + 3T
        \in  O(\sqrt{n} + t) \,.
    \end{align}
    \paragraph{Case 2: $\Delta \ge 3$.}
     Set  
    $
        r = \left\lceil \frac{1}{2} \log_{(\Delta-1)} n \right\rceil $ and $ q_1 = \left\lceil \frac{10n}{T r} \right\rceil \,.$
    Then $(q_1+1)Tr > q_1 Tr \ge 10n$, so  ($\dag$) holds.
    By Lemma~\ref{lem:ball_size_rigorous},
    \begin{align}  \label{eq:ub_max_ball_size}
    \max_{v \in V} |B(v; r+1)| <  \frac{\Delta}{\Delta-2} (\Delta - 1)^{r+1} \leq 3(\Delta-1)^{r+1} \,.
    \end{align}
    We now bound the two components of the cost.
    \begin{enumerate}[(a)]
\item \emph{Search cost.} By choice of $r$, we get   
       $ (\Delta-1)^{r+1} < (\Delta-1)^{\frac{1}{2} \log_{(\Delta-1)} n + 2}
        < \Delta^2 \sqrt{n} \,.$
     With \eqref{eq:ub_max_ball_size},  this bounds the  search cost to: 
        $T \max_{v \in V} |B(v; r+1)| \leq  3T(\Delta-1)^{r+1} < 3 T \Delta^2 \sqrt{n} \in O(t \Delta^2 \sqrt{n}) \,.$
   
\item \emph{Sampling cost.}
    The sampling cost is:
    $
        q_1 \le \frac{10n}{T (\frac{1}{2} \log_{(\Delta-1)} n)} + 1 \in O\left( \frac{n}{t \log_\Delta n} \right) \,.
    $
\end{enumerate}
    Summing both components  yields the bound $O\left( \frac{n}{t \log_\Delta n} + t \Delta^2 \sqrt{n} \right)$.
\end{proof}

\end{document}